 \spnewtheorem{thm}[theorem]{Theorem}{\bfseries}{\itshape}
 \spnewtheorem{lem}[theorem]{Lemma}{\bfseries}{\itshape}
 \spnewtheorem{prop}[theorem]{Proposition}{\bfseries}{\itshape}
 \spnewtheorem{cor}[theorem]{Corollary}{\bfseries}{\itshape}
 \spnewtheorem*{appx-lem}{Lemma}{\bfseries}{\itshape}
 \spnewtheorem*{appx-thm}{Theorem}{\bfseries}{\itshape}
 \spnewtheorem{defn}[theorem]{Definition}{\bfseries}{}
 \spnewtheorem*{rem}{Remark}{\itshape}{}
 \spnewtheorem{prob}[theorem]{Problem}{\itshape}{}
 \spnewtheorem{alg}[theorem]{Algorithm}{\itshape}{}
 \newcommand{\cost}{E}
\newcommand{\equalvolapprox}{(\frac{5}{2})^{\alpha -1}\tilde{B}_{\alpha}((1+\varepsilon)(1 + \frac{w_{\max}}{w_{\min}}))^\alpha}
\newcommand{\strictlyequalvol}{2(1+\varepsilon)(5(1+\varepsilon))^{\alpha-1}\tilde{B}_{\alpha}}
\newcommand{\lpgap}{12^{\alpha -1}}
\newcommand{\singleprocgap}{(12(1+\varepsilon))^{\alpha -1}}
\title{Energy-efficient algorithms for non-preemptive speed-scaling }
\author{
Vincent Cohen-Addad\inst{1} \and Zhentao Li\inst{1} \and
Claire Mathieu\inst{1}  \and Ioannis Milis\inst{2}}
\institute{D\'{e}partement d’Informatique, UMR CNRS 8548,
\'{E}cole Normale Sup\'{e}rieure, Paris, France\\
\email{\{vcohen, zhentao, cmathieu\}@di.ens.fr}
\and Department of Informatics, Athens University of Economics and Business, Greece\\
\email{milis@aueb.gr}
}
\begin{document}
\maketitle

\begin{abstract}
We improve complexity bounds for energy-efficient speed scheduling problems for both the single processor
and multi-processor cases. Energy conservation has become a major concern, so revisiting traditional
scheduling problems to take into account the energy consumption has been part of the agenda of the scheduling
community for the past few years~\cite{Albers_2010}.

We consider the energy minimizing speed scaling problem
introduced by Yao et al. \cite{YDS} where we wish to schedule a set of jobs,
each with a release date, deadline and work volume, on a set of identical processors.
The processors may change speed as a function of time and the energy
they consume is the $\alpha$th power of its speed. The objective is then to find a feasible schedule
which minimizes the total energy used.

We show that in the setting with an arbitrary number of processors where all work volumes
are equal, there is a $2(1+\varepsilon)(5(1+\varepsilon))^{\alpha -1}\tilde{B}_{\alpha}=O_{\alpha}(1)$ approximation
algorithm, where $\tilde{B}_{\alpha}$ is the generalized Bell number.
This is the first constant factor algorithm for this problem. This algorithm extends to general unequal processor-dependent work volumes, 
up to losing a factor of $(\frac{(1+r)r}{2})^{\alpha}$ in the approximation, where $r$ is the maximum ratio between two work volumes. 
We then show this latter problem is APX-hard, even in the special case when all release dates and deadlines are equal and $r$ is 4.

In the single processor case, we introduce a new linear programming formulation
of speed scaling and prove that its integrality gap  is at most $\lpgap$. As a corollary,
we obtain a $\singleprocgap$ approximation algorithm where there is a single processor,
improving on the previous best bound of $2^{\alpha-1}(1+\varepsilon)^{\alpha}\tilde{B}_{\alpha}$
when $\alpha \ge 25$.


\end{abstract}

\section{Introduction}
While traditional scheduling problems aim to process jobs as quickly as possible given a variety of side constraints, energy-efficient scheduling aims to also minimize the energy consumed by the system, typically by changing processor's frequency to scale its speed dynamically, slowing it  down at times to conserve energy. Thus, standard scheduling problems must now be revisited to take energy into account, and this has been part of the agenda of the scheduling community for the past few years (see the survey~\cite{Albers_2010} and the references therein).

In minimum energy scheduling problems, introduced by Yao et al. \cite{YDS}, we wish to execute jobs on a single (or a set of) processor(s) so that all jobs complete between their release date and deadline in such a way that minimizes the energy consumed. Now, each job has to execute a work volume $w$ and as the processors may change their speed, a job may be completed faster (or slower) than the time $w$ it needs to execute at speed 1. It is observed that a processor running at speed $s$  consumes power  at the rate $s^\alpha$, for a constant $\alpha>1$ (typical values of  $\alpha$ are less than 3) and so a processor running at speeds $s(t)$ during an interval $I$ would consume energy  $\int_{t\in I} s(t)^\alpha dt$. 

\subsubsection{Problem definition.}

In this paper we examine a minimum  energy scheduling problem, which in its simplest \emph{non preemptive} form can be stated as follows.

\begin{prob}\label{prob:nonpreemptive}
  \textsc{Non preemptive minimum energy scheduling ($\alpha$)}
  \\
  \textbf{Input:}\\
   - $m$ processors $P=\{p_1,p_2,\ldots ,p_m\}$. \\
   -  $n$ jobs $J=\{1,\ldots,n\}$, with a life interval $L_j=[r_j, d_j]$ and a work volume $w_j$ for each $j \in J$. \\
  \textbf{Output:}
  An assignment $S$ of an \emph{execution interval} $[s_j,e_j] \subseteq [r_j, d_j]$ for each job $j$ such that no $m+1$ execution intervals have a common intersection.
  ~\\
  \textbf{Objective:} Minimize 
$ E(S) = \displaystyle  \sum_j (e_j-s_j)\left( \frac{w_j}{e_j-s_j}\right) ^\alpha $
\end{prob}

An assignment $S$ of execution intervals for all jobs is called a \emph{schedule}. Equivalently, we could ask an algorithm to also output an assignment of each job to a processor in $P$ but this assignment is obtained greedily from the output above. By convexity of $s \rightarrow s^\alpha$, it is more efficient to run a processor at constant speed for the same job. Hence, the energy consumed by a job $j$ with execution interval $I$ is $E(S,j)=|I|\cdot (w_j/|I|)^\alpha$, and we can think of ${w_j}/{|I|}$ as the speed given to job  $j$. Clearly, $E(S)=\sum_j E(S,j)$.

For a job $j$ with life interval $[r_j, d_j]$, we say $r_j$ is the \emph{release date} of $j$ and $d_j$ is the \emph{deadline} of $j$. Hence $j$ must be executed between $r_j$ and $d_j$. 
We say that $j$ is \emph{alive} at time $t$ if $t\in L_j$. In the {\em preemptive} case, we are allowed to stop a job, execute some other job and restart the first job later on. Equivalently, we can think of breaking each job $j$ into as many pieces as we want, that all have the same life interval as $j$, their total work volume is the work volume of $j$ and they are executed non-preemptively.  In the \emph{migratory} version of the problem, stopped jobs can even continue their execution on a different processor.

\subsubsection{Related work.} 

The single processor preemptive  problem is polynomial: Yao et al. \cite{YDS}   proposed an elegant greedy algorithm whose optimality was proved  by Bansal et al. \cite{BKP}. On the other hand, the  single processor non-preemptive problem is NP-hard (Antoniadis and Huang \cite{Antoniadis_Huang}), even for instances where for any pair of  jobs such that $r_j \leq  r_{j'}$, it holds that $d_j \geq  d_{j'}$; they also proposed a $(2^{5\alpha-4})$-approximation algorithm for general instances. 
Moreover,  a $(1 + \frac{w_{\max}}{w_{\min}})^{\alpha}$-approximation algorithm for general  instances of this problem proposed in  \cite{BKLLN}, while \cite{BKLLS} introduced a $ 2^{\alpha-1}(1+\varepsilon)^\alpha \tilde{B}_\alpha$-approximation which is better for small values of $\alpha$, where $\tilde{B}_{\alpha}$ is the generalized version of the Bell number 
introduced by \cite{BKLLS} which is equal to $\tilde{B}_{\alpha} = \sum\limits_{k=0}^{\infty} \frac{k^{{\alpha} -1} e^{-1}}{k!} < \left( \frac{e^{-0.6 + \varepsilon} {\alpha}}{\ln({\alpha}+1)}\right)^{\alpha}$,
and $\tilde{B}_{\alpha} = \Omega \left( ( \frac{\alpha}{e \text{ln} \alpha} ) ^\alpha \right)$, for any $\alpha \in \mathbb{R}^{+}$ \cite{bell_numb}.

The homogeneous multiprocessor preemptive problem remains polynomial  
when  migration of jobs is allowed \cite{Albers_et_al_SPAA11,Bampis_et_al_EUROPAR12}. 
However, \cite{Albers_et_al_SPAA07} proved that the non-migratory variant of this problem, even for jobs with common release dates and deadlines, is NP-Hard and gave a PTAS for such instances.   
Greiner et al. \cite{Greiner_et_alii} proposed a transformation of an optimal solution 
to general migratory instances to a $B_{\lceil \alpha \rceil}$-approximate
solution for non-migratory problem. For the  homogeneous multiprocessor non-preemptive problem  Bampis et al. \cite{BKLLN}  proposed a  $m^{\alpha -1}(n^{1/m})^{\alpha -1}$-approximation algorithm. 

Bampis et al. \cite{BKLLS} studied the  heterogeneous multiprocessor preemptive problem where 
every processor $i$ has a different speed-to power function, $s^{\alpha(p_i)}$, and 
both the life interval and the work of jobs are processor dependent. For the migratory variant they proposed a polynomial in $\frac{1}{\varepsilon}$ algorithm returning a solution within an additive factor of $\varepsilon$  far from the optimal solution, and for non-migratory variant  an  ($((1+\frac{\varepsilon}{1 - \varepsilon})(1 + \frac{2}{n-2}))^\alpha \tilde{B}_\alpha$)-approximation algorithm.

 \begin{table}[htb]
  \begin{center}
    \begin{tabular}{|l|l|l|}
      \hline
      Problem & Complexity & Approximation Ratio\\
      \hline
       \hline
      $1|r_j,d_j, \text{pmtn}|E$ & Polynomial \cite{YDS} & \\ \hline
      $1|r_j, d_j|E$ & NP-Hard \cite{Antoniadis_Huang}& $2^{5\alpha-4}$ \cite{Antoniadis_Huang} \\
      & &  $(1 + \frac{w_{\max}}{w_{\min}})^{\alpha}$ \ \cite{BKLLN}\\
      & &  $          2^{\alpha -1}   (1+{\varepsilon})^{\alpha} \tilde{B}_\alpha  $\ \cite{BKLLS}\\
& &  ${(12(1+\varepsilon))^{\alpha -1}}$ {\bf [this paper]} \\
      \hline
      \hline
      $P|r_j,d_j, \text{pmtn}, \text{mig}|E$ & Polynomial \cite{Albers_et_al_SPAA11,Bampis_et_al_EUROPAR12}& \\ \hline
      $P|r_j = 0, d_j = 1, \text{pmtn}, \text{no-mig}|E$ & NP-Hard \cite{Albers_et_al_SPAA07}& PTAS \cite{Albers_et_al_SPAA07}\\ \hline
      $P|r_j,d_j, \text{pmtn}, \text{no-mig}|E$ & NP-Hard &  $B_{\lceil \alpha \rceil}$ \cite{Greiner_et_alii}  \\ \hline
      $P|r_j,d_j|E$ & NP-Hard & $m^{\alpha} (n^{1/m})^{\alpha - 1}$ \cite{BKLLN}\\
      & & ${ \equalvolapprox}$ {\bf  [this paper]}\\ \hline
 $P|r_j=0,d_j=1, w_{i,j},\text{pmtn}, \text{no-mig}|E$ & APX-hard {\bf [this paper]} & \\
      \hline
	\hline
    $R|r_{ij},d_{ij}, w_{ij},\text{pmtn}, \text{mig}|E$ & Polynomial$(\frac{1}{\varepsilon})$ \cite{BKLLS} & $OPT+\varepsilon$ \cite{BKLLS}\\
    $R|r_{ij},d_{ij}, w_{ij},\text{pmtn}, \text{no-mig}|E$ & NP-Hard & $(1+{\varepsilon})^{\alpha} \tilde{B}_\alpha$ \cite{BKLLS}\\

\hline
    \end{tabular}
  \end{center}
\label{related}
  \caption{Known and our (in bold) results for minimum energy scheduling problems. Problems are denoted by extending  the standard three-field notation of Graham et al. \cite{Graham_et_al}.  $P$ denotes a  homogeneous multiprocessor where all  processors obey the same speed-to-power function $s^{\alpha}$, while $R$ is used to denote a heterogeneous multiprocessor where each processor has its own 
speed-to-power function $s^{\alpha(p_i)}$. For both environments  the work volume of each job may depend on the processor it is executed and this is indicated by including $w_{ij}$ in the second field. }
\end{table}

In Table \ref{related} we summarize the results mentioned above and our contribution (in bold).
There are also results for special cases of the energy minimization problems when jobs have life intervals of a specific structure (common, agreeable, laminar, purely laminar) or/and equal work volumes ~\cite{YDS,Albers_et_al_SPAA07,BKLLN,Antoniadis_Huang,Huang_Ott}. 
Some of the works mentioned above ~\cite{YDS,BKP,Albers_et_al_SPAA11,Albers_et_al_SPAA07} as well as ~\cite{Bansal_et_al_Algorithmica11,Bansal_et_al_TOCS12} study  online algorithms for energy minimization problems in the speed scaling setting on a single processor or homogeneous multiprocessors.


\subsubsection{Our results.}

In Section~\ref{sec:equal-weights}, we give a $\equalvolapprox$-approximation to the \textsc{Non preemptive minimum energy scheduling ($\alpha$)} problem where $w_{\max}=\max_i w_i$ and $w_{\min}=\min_i w_i$ (Theorem~\ref{thm:equal-weights}). This is the first multiprocessor algorithm with an approximation factor  independent of $n$ and $m$, improving on the previous approximation of $m^{\alpha -1}(n^{1/m})^{\alpha -1}$~\cite{BKLLN}. Our ratio becomes $2(1+\varepsilon)(5(1+\varepsilon))^{\alpha-1}\tilde{B}_\alpha$ when all jobs have the same work volume and this is also the first constant approximation factor for this case.  Recall that Albers et al. \cite{Albers_et_al_SPAA07} showed the preemptive non-migratory  version of  
this problem is NP-Hard and Greiner et al. \cite{Greiner_et_alii} gave a $B_{\lceil \alpha \rceil}$-approximation algorithm for it. However the non-preemptive version resisted so far.

Up to an additional factor of $(w_{\max}/w_{\min})^\alpha$, this extends to the case where the work of jobs $w_{ij}$ depends on the processor $i$ on which $j$ is executed. In Section~\ref{section:apx}, we prove (Theorem~\ref{thm:apx_hard})  that this latter problem is APX-hard even for jobs with common life intervals and work volume in $\{1,3,4\}$.  This is the first APX-hardness result for an energy minimization problem.

In Section~\ref{sec:single-proc}, we prove (Theorem~\ref{thm:intgap}) that a natural LP relaxation for  \textsc{Non preemptive minimum energy scheduling ($\alpha$)}  on a single processor has integrality gap at most $\lpgap$. Our LP relaxation is obtained from the \emph{compact} LP relaxation in the preemptive setting of~\cite{BKLLS} (equivalent to their configuration LP)  by adding a constraint capturing non-preemption (otherwise the integrality gap is unbounded, see Lemma~\ref{lem:gap_lp1}).
Our result is the first LP relaxation with a gap independent of  $n$ and the work $w_j$ of the jobs. 
As a corollary, we obtain a $(12(1+\varepsilon))^{\alpha -1}$
approximation to the \textsc{Non preemptive minimum energy scheduling ($\alpha$)}
problem on a single processor.  
Compared to the previous best constant factor approximations of $\min\{2^{\alpha-1 }\tilde{B}_\alpha , 2^{5\alpha -4}\}$\cite{BKLLS,Antoniadis_Huang}, this is always better then $2^{5\alpha -4}$ and better than $2^{\alpha-1 }\tilde{B}_\alpha$ for any $\alpha\geq 25$.

\subsubsection{Preliminaries}

\begin{defn} An {\em independent} set of jobs is a set of jobs whose life intervals do not mutually intersect.
\end{defn}

\begin{defn} We say that an independent set is \emph{good} if the life interval of no job falls between the deadlines of two consecutive jobs of this independent set.
\end{defn}

\begin{prop}\label{lemma:half} \cite{Antoniadis_Huang}
    Let $S$ and $S'$ be two schedules that schedule job $j$ with an execution interval $I$ and $I'$ respectively. Then $E(S', j) = (|I|/|I'|)^{\alpha-1} E(S, j)$.

\end{prop}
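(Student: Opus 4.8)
The plan is to prove this by a direct algebraic computation starting from the definition of the per-job energy cost. Recall from the problem setup that, by convexity of $s \mapsto s^\alpha$, it is optimal to run job $j$ at constant speed over its execution interval, and hence its energy cost is
\[
E(S,j) = |I| \left( \frac{w_j}{|I|} \right)^\alpha.
\]
The first step is to rewrite this into a form where the dependence on the interval length is isolated from the dependence on the job. Distributing the exponent and cancelling one power of $|I|$ yields
\[
E(S,j) = |I| \cdot \frac{w_j^\alpha}{|I|^\alpha} = \frac{w_j^\alpha}{|I|^{\alpha - 1}}.
\]
The key observation is that the numerator $w_j^\alpha$ depends only on the job $j$, and not on the schedule, so it is a quantity common to both $S$ and $S'$.

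Applying the identical computation to the schedule $S'$, whose execution interval for $j$ has length $|I'|$, gives $E(S',j) = w_j^\alpha / |I'|^{\alpha - 1}$. Taking the ratio of the two expressions, the common factor $w_j^\alpha$ cancels and we obtain
\[
\frac{E(S',j)}{E(S,j)} = \frac{|I|^{\alpha - 1}}{|I'|^{\alpha - 1}} = \left( \frac{|I|}{|I'|} \right)^{\alpha - 1},
\]
which rearranges directly into the claimed identity $E(S',j) = (|I|/|I'|)^{\alpha - 1} E(S,j)$.

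Since the statement reduces to a single-line manipulation of the energy formula under the constant-speed convention already established in the problem definition, I do not anticipate any genuine obstacle. The only point meriting a moment of care is that the division by $|I'|^{\alpha - 1}$ (and the preceding cancellation by $|I|^\alpha$) be well-defined, which holds because any feasible execution interval for a job of positive work volume has strictly positive length; in the degenerate case the energy is infinite and the identity holds vacuously under the usual conventions.
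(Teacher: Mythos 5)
Your proof is correct. The paper itself gives no proof of this proposition (it is cited from Antoniadis and Huang and treated as immediate), and your direct computation --- writing $E(S,j) = w_j^\alpha / |I|^{\alpha-1}$ from the constant-speed energy formula and cancelling the schedule-independent factor $w_j^\alpha$ in the ratio --- is exactly the standard one-line argument the paper implicitly relies on.
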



\section{Multiprocessor scheduling}\label{sec:equal-weights}
In this section we present an approximation algorithm for \textsc{Non-preemptive minimum energy scheduling ($\alpha$)}  and show the following theorem:
\begin{thm}\label{thm:equal-weights}
{There exists a polynomial-time approximation algorithm for  \textsc{Non preemptive minimum energy scheduling ($\alpha$),}}
with approximation factor
    $(\equalvolapprox)$.
\label{EVJ}
\end{thm}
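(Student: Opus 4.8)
The plan is to bound the optimal non-preemptive energy from below by the optimal \emph{preemptive migratory} energy $OPT_{pm}$, which is computable in polynomial time \cite{Albers_et_al_SPAA11,Bampis_et_al_EUROPAR12}; since every non-preemptive non-migratory schedule is in particular a feasible preemptive migratory one of the same cost, we have $OPT_{pm}\le OPT$. I would first prove the clean equal-work-volume bound $2(1+\varepsilon)(5(1+\varepsilon))^{\alpha-1}\tilde{B}_{\alpha}$ and then lift it to general work volumes. Note that substituting $w_{\max}/w_{\min}=1$ into the claimed factor $(\tfrac{5}{2})^{\alpha-1}\tilde{B}_{\alpha}((1+\varepsilon)(1+w_{\max}/w_{\min}))^{\alpha}$ reproduces exactly the equal-volume bound, so the term $(1+w_{\max}/w_{\min})^{\alpha}$ is precisely what replaces the factor $2^{\alpha}$ of the equal case; this guides how the two reductions should fit together.

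For the equal-volume core, from the optimal preemptive migratory schedule I would extract, for each job $j$, a single contiguous \emph{home interval} $I_j\subseteq L_j$ — for instance the shortest interval containing a constant fraction of the work that the schedule does on $j$ — and run $j$ at constant speed $w_j/|I_j|$ on $I_j$. Proposition~\ref{lemma:half} controls how the energy changes when a job's execution interval is rescaled, so this ``make contiguous'' step should cost only a constant factor in the base of the exponent (the source of the $5/2$, resp. $5$, constant) relative to $OPT_{pm}$. After discretizing interval lengths to powers of $(1+\varepsilon)$ (costing the $(1+\varepsilon)^{\alpha}$ factor), the jobs fall into geometric length classes, and the home intervals within one class behave like equal-length intervals to be packed with multiplicity at most $m$.

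To enforce the packing constraint that no $m+1$ execution intervals share a point, I would apply a random-shift rounding: translate each home interval by a common random offset inside a slightly enlarged window and, wherever more than $m$ jobs still overlap, resolve the excess by shrinking the offending intervals (equivalently, speeding those jobs up). The number of jobs covering a fixed point should concentrate around a Poisson-distributed count, so the expected energy blow-up from resolving congestion is governed by $E[N^{\alpha-1}]$ for a Poisson variable $N$, which is exactly the generalized Bell number $\tilde{B}_{\alpha}=\sum_{k=0}^{\infty}k^{\alpha-1}e^{-1}/k!$. Summing over classes, and using that the classes are geometrically separated, should keep the total within the claimed $\tilde{B}_{\alpha}$ factor.

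Finally, to pass from equal to general work volumes, I would treat the rounded intervals sharing a common region as a single aggregated demand: a job of work $w_j$ that is sped up to accommodate a co-located job of work at most $w_{\max}$, while itself having work at least $w_{\min}$, incurs at most a $(1+w_{\max}/w_{\min})$ increase in speed, hence a $(1+w_{\max}/w_{\min})^{\alpha}$ increase in energy, which replaces the $2^{\alpha}$ of the equal-volume analysis. I expect the main obstacle to be the congestion-control step: simultaneously guaranteeing that the rounded schedule is \emph{feasible} (at most $m$ intervals overlap after resolution) and that the expected cost of the forced speed-ups is bounded by $\tilde{B}_{\alpha}$ rather than by a quantity depending on $m$ or $n$ — that is, establishing the Poisson bound on overlap multiplicities and bounding $E[N^{\alpha-1}]$ cleanly and independently of the number of processors and jobs.
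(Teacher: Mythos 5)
Your proposal diverges from the paper's proof and leaves genuine gaps at its two central steps. The paper never compares against the preemptive migratory optimum: it transforms the \emph{non-preemptive} optimum directly, by (i) greedily extracting $m$ maximal independent sets $\mathcal{J}_1,\dots,\mathcal{J}_m$ of life intervals and pinning $\mathcal{J}_i$ to processor $p_i$; (ii) showing (Lemma~\ref{lemma:independent}) that every other job $j$ either overlaps some job of $\mathcal{J}_i$ on $p_i$ in at least $2/5$ of an execution interval, in which case both jobs are run inside the intersection (this co-scheduling is the sole source of the $(1+\frac{w_{\max}}{w_{\min}})^{\alpha}$ term), or else fits in an idle middle fifth of its own interval on $p_i$; (iii) cutting time at the deadlines of $\mathcal{J}_i$ (Lemma~\ref{lemma:cut}); and then handing the restricted instance to the algorithm of Bampis et al.~\cite{BKLLS} as a black box, which is where the $(1+\varepsilon)^{\alpha}\tilde{B}_{\alpha}$ factor comes from. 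You instead start from $OPT_{pm}$ and try to rebuild both the contiguity step and the Bell-number machinery yourself, and both attempts fail as sketched.

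Concretely: first, the ``home interval'' step is not a constant-factor loss in any sense that helps you. The energy of a single job indeed grows by at most $2^{\alpha}$ when run at constant speed on the shortest interval containing half its work, but \emph{feasibility} collapses: in a preemptive or migratory schedule, $n$ equal-work jobs can be executed round-robin in tiny slices on one processor, so every job's home interval is essentially the whole window and the home intervals overlap $n$-fold while the capacity is $m$. No constant shrinking and no shift removes this congestion, because shifting does not reduce the total interval mass crossing a point; resolving it by speed-ups costs a factor depending on $n/m$, not a constant. This is exactly the difficulty the paper's independent-set construction is designed to avoid: the anchor jobs of $\mathcal{J}_i$ are pairwise disjoint by construction, and every other job is charged either to a large overlap with an anchor or to its own middle fifth. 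Second, the Poisson claim has no basis as set up: a \emph{common} random offset creates no independence between jobs, and the Bell-number bounds of Greiner et al.~\cite{Greiner_et_alii} and of Bampis et al.~\cite{BKLLS} arise from independently assigning jobs to processors \emph{given} that the source schedule already runs at most $m$ jobs at every instant --- a precondition your home intervals destroy. So invoking $E[N^{\alpha-1}]=\tilde{B}_{\alpha}$ is unjustified. A smaller point: proving the guarantee relative to $OPT_{pm}$ is strictly stronger than what the theorem asks (its ratio is relative to the non-preemptive optimum), so even a repaired version of your plan would be attacking a harder statement than necessary.
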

In particular, when all jobs have the same work volume, this yields a schedule which consumes energy  within 
$\strictlyequalvol$
of the optimal energy, a factor  depending only on $\alpha$ and independent 
from the number $n$ of job and of the number $m$ of processors. 

Our algorithm uses a reduction to  the following problem, previously studied by  Bampis et al. \cite{BKLLS}.
\begin{prob}
  \textsc{Non-preemptive fully heterogeneous minimum energy scheduling $(\alpha)$}\\
  \textbf{Input:}\\
  - $m$ heterogeneous processors $P=\{p_1,p_2,\ldots,p_m\}$, and a number $\alpha(p_i) \le \alpha$ for each one.\\
  -    $n$ jobs $J=\{1,\ldots,n\}$, a {life interval} $L_{ij}=[r_{ij}, d_{ij}]$ and a  work volume $w_{ij}$ for each $j \in J$, $p_i \in P$.\\
  \textbf{Output:}
  An assigment $S$ of a  processor $p(j)$ and an execution interval 
$[s_j,e_j] \subseteq [r_{ij}, d_{ij}]$  for each job $j \in J$ such that for each pair of jobs $j_1,j_2$, if $p(j_1) = p(j_2)$ then   $[s_{j_1}, e_{j_1}] \cap [s_{j_2}, e_{j_2}] = \emptyset$.\\
  \textbf{Objective:} Minimize
$\displaystyle  E(S)=\sum_{p_i} \sum_{j:p(j)=p_i} (e_j-s_j)\left( \frac{ w_j}{e_j-s_j}\right) ^{\alpha(p(j))}$.
\label{FHproblem}
\end{prob}

\begin{thm}\cite{BKLLS}\label{thm:previous}
There is an approximation algorithm for the  \textsc{Non-preemptive fully heterogeneous minimum energy scheduling} problem without migration with approximation ratio
$(1+\varepsilon)^{\alpha}\tilde{B}_{\alpha}$.
\end{thm}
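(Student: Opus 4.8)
The plan is to compute an optimal \emph{migratory preemptive} schedule and round it into a non-preemptive one; since a non-preemptive schedule is automatically non-migratory, this is exactly the regime of the theorem. The migratory preemptive problem is the weakest of the relaxations and (see Table~\ref{related}) is solvable to within an additive $\varepsilon$ in time polynomial in $1/\varepsilon$, so I would first solve it to obtain a fractional schedule $\sigma$. Dropping both the non-preemption and the non-migration constraints can only lower the optimum, so $\cost(\sigma)$ lower-bounds the non-preemptive optimum $\mathrm{OPT}$ up to the additive $\varepsilon$ from solving the convex program; $\sigma$ is the quantity against which everything will be charged. The two factors to pay during rounding are a $(1+\varepsilon)^{\alpha}$ from discretization and the $\tilde{B}_{\alpha}$ from resolving rounding conflicts.

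First I would discretize $\sigma$. On each processor $p_i$, round every speed up to the nearest power of $(1+\varepsilon)$ and snap the fractional pieces onto a geometric time grid. Since energy scales as the $\alpha(p_i)$-th power of speed and $\alpha(p_i)\le\alpha$, by convexity this inflates the energy by a factor at most $(1+\varepsilon)^{\alpha}$, and it leaves only polynomially many distinct \emph{slots}, each a (processor, speed class, grid interval) triple designed to carry one job's worth of work. Grouping the fractional pieces this way, feasibility of $\sigma$ guarantees that the total fractional load committed to any single slot is at most one.

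The heart of the argument is the rounding. For each job $j$ independently, I would commit it to a single processor and a single slot, choosing a slot with probability equal to the fraction of $w_{ij}$ that $\sigma$ executes there; every such slot already lies (up to the $(1+\varepsilon)$ grid slack) inside $j$'s life interval on that processor, so the hard constraint $[s_j,e_j]\subseteq[r_{ij},d_{ij}]$ is preserved. After this, several jobs may have selected the same slot, and they must be placed sequentially, each compressed into a sub-interval; by Proposition~\ref{lemma:half} the energy of a job compressed from length $\ell$ to length $\ell'$ grows by $(\ell/\ell')^{\alpha-1}$. Because the expected load of any slot is at most one, the per-slot occupancy is a sum of independent indicators of mean at most one, i.e.\ a Poisson-type distribution; the expected cost of resolving the conflicts inside a slot is therefore controlled by the $(\alpha-1)$-th moment of a mean-one occupancy, which is exactly what the generalized Bell number bounds: $\sum_{k\ge 0}\frac{k^{\alpha-1}e^{-1}}{k!}=\tilde{B}_{\alpha}$. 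Summing over slots and combining with the discretization loss gives expected energy at most $(1+\varepsilon)^{\alpha}\tilde{B}_{\alpha}\,\mathrm{OPT}$; taking the best of polynomially many independent runs (or derandomizing by the method of conditional expectations) removes the expectation.

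The main obstacle is the moment computation, and this is precisely where the definition of $\tilde{B}_{\alpha}$ earns its keep. One must show, for every real $\alpha>1$ and not merely in the Poisson limit, that the expected $(\alpha-1)$-th power of the slot occupancy is bounded by $\tilde{B}_{\alpha}$, with the normalization (mean-one load, correct accounting of the job's own contribution) arranged so that the surviving exponent is $\alpha-1$ rather than $\alpha$. A secondary but necessary point is to certify that the conflicting jobs assigned to a slot can always be repacked sequentially without spilling past the slot boundaries, and hence outside their life intervals; this is the role of the geometric grid and the extra $(1+\varepsilon)$ slack introduced in the discretization step.
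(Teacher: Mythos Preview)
This theorem is not proved in the paper: it is quoted from \cite{BKLLS} (note the citation inside the theorem environment) and used purely as a black box in Step~5 of Algorithm~\ref{alg:equalweight}. There is therefore no ``paper's own proof'' to compare your attempt against.

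For what it is worth, your sketch is in the right spirit for the result of \cite{BKLLS}: a preemptive LP relaxation, a $(1+\varepsilon)^\alpha$ discretization loss, and an independent randomized assignment whose congestion moment is controlled by the generalized Bell number. Two remarks if you intend to flesh this out. First, the paper's Table~1 attributes the $(1+\varepsilon)^\alpha\tilde B_\alpha$ ratio to the \emph{preemptive} non-migratory heterogeneous problem ($R|r_{ij},d_{ij},w_{ij},\text{pmtn},\text{no-mig}|E$), and indeed Algorithm~\ref{alg:equalweight} calls the \cite{BKLLS} routine to obtain a \emph{preemptive} schedule and only afterwards reorders within each interval $I_l^i$ (Step~6) to make it non-preemptive; so the ``non-preemptive'' in the theorem label reflects how the result is applied here rather than a stand-alone non-preemptive guarantee you should try to reprove directly. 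Second, your repacking step (``conflicting jobs assigned to a slot can always be repacked sequentially without spilling past the slot boundaries'') is the delicate point: if several jobs land in the same slot you must fit all of them inside, and your current plan does not explain what happens when the total compressed length exceeds the slot length --- the $(1+\varepsilon)$ slack alone does not ensure this. In \cite{BKLLS} this is handled via the structure of the configuration LP rather than by geometric slack.
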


\subsection{Overview}

The algorithm proceeds as follows: we consider the life intervals of all the jobs, greedily find $m$ maximal independent sets, 
 and assign the jobs of the $i$th independent set $\mathcal{J}_i$ to processor $p_i$. Then we partition time on  $p_i$ according to the deadlines of the jobs in $\mathcal{J}_i$, and restrict ourselves to schedules such that no execution interval on $p_i$ overlaps such a deadline. We solve the resulting restricted problem using the algorithm from Bampis et al.'s~\cite{BKLLS} to obtain a feasible schedule.

To analyze its cost, we first show that an optimal solution can be transformed into a solution satisfying our additional constraints and without increasing the cost by too much. 
We start with an optimal solution and attempt, for each job $j$ in the $i$-th independent set,  to  move $j$ to processor $p_i$ and execute it in the middle fifth of its life interval. Its execution interval is then shrunk by a factor of at most 5 and, by Proposition~\ref{lemma:half}, its energy consumption is increased by a factor of at most $5^{\alpha-1}$. If we are unable to do so for some $j$, it is because of some other job $j'$ on processor $p_i$ with a significant overlap with $j$, and we execute both $j$ and $j'$ during the time of overlap.




To guarantee that no execution interval on $p_i$ crosses one of our selected deadlines, we argue that each execution interval crosses at most one such deadline and further modify the schedule, restricting the execution interval to one of the two sides of the deadline, up to shrinking its execution interval by a factor of 2.


Finally, the algorithm of Bampis et al. \cite{BKLLS} provides an approximation to our constrained problem.

\subsection{Scheduling algorithm}\label{subsection:algorithm}

We now give a detailed description of our algorithm.

\begin{alg}\label{alg:equalweight}
~
\begin{enumerate}
\item
$R\gets J$
\item
For $i=1$ to $m$:
    \begin{enumerate}
    \item $\mathcal{I}_i\gets \emptyset$, $k\gets 0$ and $t_0^i\gets -\infty$.
    \item While  $\exists j\in R$  such that $\{ j\} \cup \mathcal{I}_i$ is an independent set
    \item ~~~ Find such a $j$ with $d_j$ minimum and let $\mathcal{I}_i \gets \mathcal{I}_i\cup \{ j\}$ and $R\gets R\setminus \{ j\}$.
    \item ~~~ $k\gets k+1$ and $t_k^i\gets d_j$. 
    \item $t_{k+1}^i\gets +\infty$
    \end{enumerate}
\item
For every processor $p_i$, for $k=1$ to $|\mathcal{I}_i |+1$, let $I_k^i=[  t_{k-1}^i , t_{k}^i ]$.
\item
Create an instance of Problem \ref{FHproblem} as follows:
    \begin{enumerate}
    \item
    For every processor $p_i$, for every interval $I_{l}^i$, create a heterogeneous processor $(i,l)$ with $\alpha_{i,l} = \alpha$.
    \item
    For every job $j \in J$ which is alive during  part or all of some $I_{l}^i$, set\\
$~~~$  release date   $  r_{(i,l)j} = \max ( r_j - t_{l-1}^i  ,0)   $, deadline    $   d_{(i,l)j} = \min ( d_j-  t_{l-1}^i ,  t_{l}^i - t_{l-1}^i  )    $, work $w_{(i,l)j} = w_j$
    \end{enumerate}
\item
Solve the created problem using the algorithm from~\cite{BKLLS}\\
Let $J_{i,l}$ be the set of jobs scheduled (preemptively) on heterogeneous processor $(i,l)$.
\item
For each $(i,l)$, reorder the execution intervals inside $I^i_l$ \\
$~~$ so that the jobs of $J_{i,l}$ are executed by order of non-decreasing deadline.
\end{enumerate}
\end{alg}

\subsection{Analysis}
We first prove the following lemma which has a crucial role in the analysis of the approximation ratio.

\begin{lem}\label{lemma:independent}
Let $\{\mathcal{J}_1,...,\mathcal{J}_m\}$ be a subpartition of $J$ such that each $\mathcal{J}_i$ is an independent set of jobs, and $S$ be a schedule of $J$. Then there exists a schedule $S'$  such that for every $i$ all the jobs of $\mathcal{J}_i$ are executed on processor $p_i$,
and whose cost satisfies  $\cost (S')\leq (5/2)^{\alpha -1} (1 + \frac{w_{\max}}{w_{\min}})^{\alpha} \cost (S)$   $\forall i \in \{1,\ldots,m\}$.
\end{lem}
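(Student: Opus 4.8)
The plan is to build $S'$ from $S$ by relocating, for every $i$, each job of $\mathcal{J}_i$ onto processor $p_i$, while leaving every job outside $\bigcup_i\mathcal{J}_i$ exactly where $S$ put it; the whole content of the lemma is then to control the energy blow-up caused by these relocations. The starting point is that $\mathcal{J}_i$ is an independent set, so the life intervals $L_j$ of its jobs are pairwise disjoint; consequently, as long as each $j\in\mathcal{J}_i$ is executed inside $L_j$, the execution intervals placed on $p_i$ are automatically disjoint and feasibility \emph{among the jobs of $\mathcal{J}_i$} is never in question. For each $j$ I would reserve as its default slot the middle fifth $M_j$ of $L_j$, of length $\tfrac15|L_j|\ge\tfrac15|I_j|$, where $I_j$ is the execution interval of $j$ in $S$; the flanking fifths on either side of $M_j$ act as slack that lets me slide $j$ sideways within $L_j$ when its default slot is obstructed. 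Because distinct jobs of $\mathcal{J}_i$ have disjoint middle fifths, their default slots never clash with one another, so the only obstructions come from jobs that $S$ already runs on $p_i$ and that we keep in place.

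The analysis then proceeds job by job and splits into two cases. In the no-conflict case I arrange that $j$ is executed on $p_i$ in a free sub-interval of $L_j$ of length at least $\tfrac25|I_j|$, the slack around $M_j$ guaranteeing such an interval exists; then $j$'s execution interval shrinks by a factor of at most $5/2$, and Proposition~\ref{lemma:half} gives $\cost(S',j)\le(5/2)^{\alpha-1}\cost(S,j)$. In the conflict case the slot of $j$ is blocked by a job $j'$ that $S$ runs on $p_i$ and whose interval $I_{j'}$ overlaps the slot substantially; I run $j$ on $p_i$ inside that overlap, concurrently with $j'$ (which is re-routed to a processor freed by the relocations). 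Declaring a blocker "significant" only when the overlap has length $x\ge\tfrac12|I_{j'}|$, Proposition~\ref{lemma:half} together with $\cost(S,j')=w_{j'}^{\alpha}/|I_{j'}|^{\alpha-1}$ and $\cost(S',j)=w_{j}^{\alpha}/x^{\alpha-1}$ gives
\[ \cost(S',j)\;=\;\Big(\tfrac{w_j}{w_{j'}}\Big)^{\alpha}\Big(\tfrac{|I_{j'}|}{x}\Big)^{\alpha-1}\cost(S,j')\;\le\;2^{\alpha-1}\Big(\tfrac{w_{\max}}{w_{\min}}\Big)^{\alpha}\cost(S,j')\;\le\;\Big(\tfrac{5}{2}\Big)^{\alpha-1}\Big(\tfrac{w_{\max}}{w_{\min}}\Big)^{\alpha}\cost(S,j'). \]

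Summing over all jobs then yields the bound. Jobs outside $\bigcup_i\mathcal{J}_i$ keep their energy $\cost(S,\cdot)$, each no-conflict job charges $(5/2)^{\alpha-1}\cost(S,\cdot)$ to its own account, and each conflict job charges entirely to its blocker $j'$. The decisive bookkeeping point is that any job is charged as a blocker at most once: a blocker $j'$ lives on a single processor $p_i$ of $S$, and only the jobs of $\mathcal{J}_i$ are relocated onto $p_i$; since those jobs have disjoint middle fifths, their slots are disjoint, so at most one of them can overlap $I_{j'}$ by more than $\tfrac12|I_{j'}|$. Hence each $\cost(S,k)$ appears with coefficient at most $(5/2)^{\alpha-1}\big(1+(w_{\max}/w_{\min})^{\alpha}\big)$ — once for $k$'s own placement and at most once as a blocker — and the claimed factor follows from the superadditivity inequality $1+(w_{\max}/w_{\min})^{\alpha}\le(1+w_{\max}/w_{\min})^{\alpha}$, valid since $\alpha\ge1$.

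The step I expect to be the main obstacle is the geometry and feasibility underlying the case split, rather than the (routine) energy estimates. I must verify that the middle-fifth slack always makes at least one of the two cases available: either a free sub-interval of length $\ge\tfrac25|I_j|$ exists in $L_j$ on $p_i$, or some blocker overlaps the slot by $\ge\tfrac12|I_{j'}|$, so that the two regimes are genuinely exhaustive. Equally delicate is checking global feasibility: performing all the relocations for every $i$ at once, and in particular re-routing each blocker $j'$ off $p_i$ during the overlap, must still leave a schedule in which no time instant is covered by more than $m$ execution intervals. Establishing these two structural facts is the crux; once they are in place, the per-job bounds above and the superadditive summation close the argument.
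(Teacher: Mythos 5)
Your high-level plan --- relocate each $\mathcal{J}_i$ onto $p_i$, keep everything else in place, and split the analysis into a ``free slot'' case and a ``blocker'' case with per-job charging --- is the same as the paper's, but the two points you yourself flag as the crux are genuine gaps, and with your parameters they cannot be closed. First, your case split is not exhaustive. Take $L_j=I_j=[0,5]$, and let $p_i$ run $j_1'$ on $[-10,2.2]$ and $j_2'$ on $[2.8,15]$. The free space inside $L_j$ on $p_i$ has length $0.6<\frac{2}{5}|I_j|=2$, yet each blocker meets $L_j$ (hence any slot you could slide within it) in an interval of length at most $2.2$, far below $\frac{1}{2}|I_{j'}|=6.1$; so neither of your cases applies. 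The paper avoids this by defining the conflict case with the threshold $|I_{j'}\cap I_j|\ge\frac{2}{5}\min(|I_j|,|I_{j'}|)$, where the $\min$ is essential: if no job on $p_i$ satisfies it, then no $I_{j'}$ is contained in $I_j$, so every job on $p_i$ meeting $I_j$ crosses one of its two endpoints; since jobs on $p_i$ are pairwise disjoint there are at most two of them, each covering less than $\frac{2}{5}|I_j|$ of $I_j$, leaving an idle interval of length at least $\frac{1}{5}|I_j|$ in the middle of $I_j$ (shrink factor $5$, which still fits the claimed bound because $(1+w_{\max}/w_{\min})^\alpha\ge 2^{\alpha}\ge 2^{\alpha-1}$).

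The second gap is the conflict case itself. Your bookkeeping needs the blocker $j'$ to keep its original energy, which forces you to re-route $j'$, at its full length $|I_{j'}|$, onto ``a processor freed by the relocations''. Nothing guarantees such a processor exists: relocating $j\in\mathcal{J}_i$ only frees the sub-interval $I_j$ on its source processor, $I_{j'}$ generally extends beyond $I_j$, and the freed slots may in any case be re-occupied by the jobs of some $\mathcal{J}_q$ arriving on that very processor. The paper never moves the blocker: it schedules $j$ and $j'$ back-to-back inside the overlap $I_j\cap I_{j'}$ on $p_i$ at one constant speed, with total energy $(w_j+w_{j'})^\alpha/|I_j\cap I_{j'}|^{\alpha-1}$, and charges this combined cost to the single job with the smaller execution interval, which yields exactly the factor $(5/2)^{\alpha-1}(1+w_{\max}/w_{\min})^\alpha$. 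This merging device simultaneously removes the feasibility question and produces the $(1+w_{\max}/w_{\min})^\alpha$ term; without it (or a proof that your re-routing is feasible), your argument does not go through.
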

\begin{proof}
  Let $\mathcal{J}=\mathcal{J}_1\cup \ldots \cup \mathcal{J}_m$, and, for $j\in \mathcal{J}_i$, let  $I_j$ denote the execution interval of job $j$ in schedule $S$. Assume that in $S$ job $j$ is executed on a processor other than $p_i$.
  We distinguish two cases:
  \begin{enumerate}
  \item
    If there exists a job $j'$ executed on $p_i$ and such that  $|I_{j'} \cap I_j|\geq \frac{2}{5}\min ( |I_j|, |I_{j'}|)$,
    then in $S'$, we schedule both  jobs $j$ and $j'$ on $p_i$ during the interval $I_{j'} \cap I_j$, so that the energy used is minimized. Processor $p_i$ runs at a constant speed during $I_{j'} \cap I_j$, executing total work $w_j + w_{j'}$, and the energy consumed is therefore
    $\frac{(w_j + w_{j'})^{\alpha}}{(|I_{j'} \cap I_j|)^{\alpha -1}}$.

    For the cost analysis, by symmetry we may assume that $|I_j| \le |I_{j'}|$, and note that $E(S,j)=(w_j)^{\alpha}/(|I_j|)^{\alpha-1}$. We then write:
$$\frac{(w_j + w_{j'})^{\alpha}}{|I_{j'} \cap I_j|^{\alpha -1}}     
    \leq  \frac{(w_j + w_{j'})^{\alpha}}{(2|I_j|/5)^{\alpha -1}}
    =   \left(\frac{5}{2}\right)^{\alpha -1} \left(1 + \frac{w_{j'}}{w_j}\right)^\alpha \cdot \frac{w_j^{\alpha}}{|I_j|^{\alpha -1}}
    \leq   \left(\frac{5}{2}\right)^{\alpha -1} \left(1 + \frac{w_{\max}}{w_{\min}}\right)^\alpha \cdot E(S,j).    
    $$

\item
      Otherwise, on processor $p_i$ no processor $j'$ has $I_{j'}\subseteq I_j$, so on $p_i$ during $I_j$ the execution has at most two jobs $j'$, one whose execution interval overlaps the start and the other whose execution interval overlaps the end, and there must be an idle interval of size at least $\frac{1}{5}|I_j|$ in the center of $I_j$: then in $S'$ we schedule job $j$ on $p_i$ during this interval. 
      
      For the cost analysis, the execution interval of $j$ is shrunk by a factor of at most $5$ so
    $$E(S',j)= 5^{\alpha -1} E(S,j) \le \left(\frac{5}{2}\right)^{\alpha -1} \left(1 + \frac{w_{\max}}{w_{\min}}\right)^\alpha E(S,j).$$
  \end{enumerate}
 \qed\end{proof}



We now show that we can force every job that is executed on processor $p_i$  to be scheduled during a subinterval of some $[  t_{\ell-1}^i , t_{\ell}^i ]$. 

\begin{lem}\label{lemma:cut}
 Let $\{\mathcal{J}_1,...,\mathcal{J}_m\}$  and $(I_l^i)_{i,l}$
 be defined as in Algorithm~\ref{alg:equalweight}. Let $S'$ be a schedule
 such that for every $i$ all the jobs  of  $\mathcal{J}_i$ are executed on processor $p_i$ .
 There exists a schedule $S''$ such that for each processor $p_i$ and for each job $j$ that is executed on $p_i$ in $S'$,
in $S''$ $j$ is executed on $p_i$ and  the execution interval of $j$ is included in some $ I_l^i$;
 moreover, for any $j$, $E(S'',j)\leq 2^{\alpha -1} E(S', j)$. 
\end{lem}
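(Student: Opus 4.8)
The plan is to show that the partition of each processor's timeline by the points $t_\ell^i$ can be enforced on the execution intervals at a cost of at most a factor $2^{\alpha-1}$ per job. The key structural fact to exploit is the following: on processor $p_i$, the breakpoints $t_\ell^i$ are exactly the deadlines of the jobs in $\mathcal{I}_i$ chosen greedily in order of increasing deadline, and $\mathcal{I}_i$ is a \emph{maximal} independent set among the remaining jobs. First I would fix a processor $p_i$ and a job $j$ executed on $p_i$ in $S'$ whose execution interval $I_j$ crosses at least one breakpoint $t_\ell^i$.

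\textbf{The crucial claim: each interval crosses at most one breakpoint.} The heart of the argument is to prove that the execution interval $I_j \subseteq L_j = [r_j, d_j]$ can contain at most one breakpoint $t_\ell^i$ in its interior. I would argue by contradiction: suppose $t_\ell^i < t_{\ell+1}^i$ are two consecutive breakpoints both lying inside $L_j$. These are the deadlines of two consecutive jobs $j_\ell, j_{\ell+1} \in \mathcal{I}_i$ in the greedy order. The \emph{good}-independent-set / greedy maximality property should force a contradiction: since $L_j$ straddles two consecutive deadlines, either $j$'s life interval was available to be inserted into $\mathcal{I}_i$ between $j_\ell$ and $j_{\ell+1}$ (contradicting the greedy choice of minimum deadline, or the maximality of the independent set), or $j$ overlaps one of $j_\ell, j_{\ell+1}$ in a way that should have placed $j$'s own deadline as a breakpoint. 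This is where I would lean on the \emph{good} independent set definition — that no job's life interval falls strictly between the deadlines of two consecutive jobs of the independent set — to rule out the two-breakpoint case. I expect this step to be the main obstacle, since it requires carefully unpacking how the greedy insertion (minimum deadline first, maximal independent set) guarantees the no-job-falls-between property.

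\textbf{Cutting at the single breakpoint.} Once each crossing interval $I_j$ is known to contain exactly one breakpoint $t_\ell^i$, the breakpoint splits $I_j$ into a left piece $I_j \cap (-\infty, t_\ell^i]$ and a right piece $I_j \cap [t_\ell^i, +\infty)$. At least one of these two pieces has length $\ge |I_j|/2$. In $S''$ I would re-execute $j$ entirely within the longer of the two pieces, which lies inside a single interval $I_{l'}^i$. Since this piece still lies inside $L_j$ (as $I_j \subseteq L_j$ and both pieces are subsets of $I_j$), feasibility with respect to the life interval is preserved. By Proposition~\ref{lemma:half}, shrinking the execution interval by a factor of at most $2$ multiplies the energy by at most $(|I_j|/(|I_j|/2))^{\alpha-1} = 2^{\alpha-1}$, giving exactly $E(S'',j) \le 2^{\alpha-1} E(S',j)$.

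\textbf{Maintaining feasibility across jobs.} The remaining point is that relocating many jobs simultaneously must still yield a valid non-overlapping schedule on each $I_{l'}^i$. Here I would note that all relocated pieces on $p_i$ now live strictly inside the intervals $I_l^i$, and within each such interval we are free to reorder and repack the jobs assigned to it without overlap, exactly as the final repacking step (Step~6) of Algorithm~\ref{alg:equalweight} does; since the total work assigned to each $I_l^i$ is unchanged and each relocated piece fits within its interval, no feasibility is lost and the per-job energy bound is untouched. Collecting the per-job bounds gives the stated inequality for every $j$, completing the proof.
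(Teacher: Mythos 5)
Your cutting step and cost analysis are exactly right: once you know each execution interval on $p_i$ crosses at most one breakpoint $t_\ell^i$, keeping the longer of the two pieces and applying Proposition~\ref{lemma:half} gives the factor $2^{\alpha-1}$, and feasibility is automatic because the new execution intervals are subintervals of the old, pairwise disjoint ones (your worry about repacking is unnecessary; that reordering belongs to the final step of Algorithm~\ref{alg:equalweight}, not to this lemma). The genuine gap is in your justification of the crucial claim itself. First, you slip from the execution interval $I_j$ to the life interval $L_j$: the hypothesis you set out to contradict --- that two consecutive breakpoints lie inside $L_j$ --- is not contradictory at all. A job whose life interval straddles several deadlines of the independent set simply intersects the life intervals of the corresponding chosen jobs, which is precisely why the greedy algorithm did not select it; neither maximality, nor the minimum-deadline rule, nor the \emph{good} independent set property is violated by such a job. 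Moreover, $j$ may not even have been available in $R$ when $\mathcal{J}_i$ was built (it may belong to some $\mathcal{J}_{i'}$ with $i' \neq i$, or to no independent set), so arguments about what the greedy step ``would have done'' with $j$ cannot apply.

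The property you need does not come from the greedy construction at all; it comes from the hypothesis of the lemma that your argument never invokes: in $S'$, \emph{every job of $\mathcal{J}_i$ is executed on $p_i$}. Since the jobs of $\mathcal{J}_i$ are independent and indexed by increasing deadline, the $l$-th such job has release date after $t_{l-1}^i$ and deadline $t_l^i$, so its life interval --- hence its execution interval in $S'$ --- is contained in $I_l^i$. Consequently, if the execution interval of some job $j$ on $p_i$ intersected three consecutive intervals $I_{l-1}^i$, $I_l^i$, $I_{l+1}^i$, it would contain $I_l^i$ entirely and would therefore overlap, on the same processor, the execution interval of the $l$-th job of $\mathcal{J}_i$, contradicting feasibility of $S'$. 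This is the paper's one-line argument; it certifies that each execution interval meets at most two consecutive intervals $I_l^i$, after which your cutting step finishes the proof.
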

\begin{proof}
  We consider the schedule $S'$.
  We first show that the execution interval of a job $j \in R$ that is executed on processor $i$ cannot intersect more than
  two intervals $I_l^i$. Assume, for a contradiction, that there is a job $j$ on processor $i$ whose execution interval intersects at least three intervals
  $I_{l-1}^i$, $I_{l}^i$ and $I_{l+1}^i$. Then it must contain $I_l^i$ entirely, contradicting the fact that there exists a
  job of $\mathcal{J}_i$ that must be executed on processor $i$ during $I_l^i$.

  Therefore, $[s_j,e_j]\subseteq I_{l}^i \cup I_{l+1}^i$. Write $[s_j,e_j]=[s_j,t^i_{l}]\cup [t^i_{l},e_j]$. Cut its execution
  interval at $t^i_l$ and keep the larger of the two subintervals. By Proposition~\ref{lemma:half}, this leads to a schedule $S''$ which satisfies the desired property.
\qed\end{proof}
\begin{figure}[ht!]
  \begin{center}
    \includegraphics[scale=0.35]{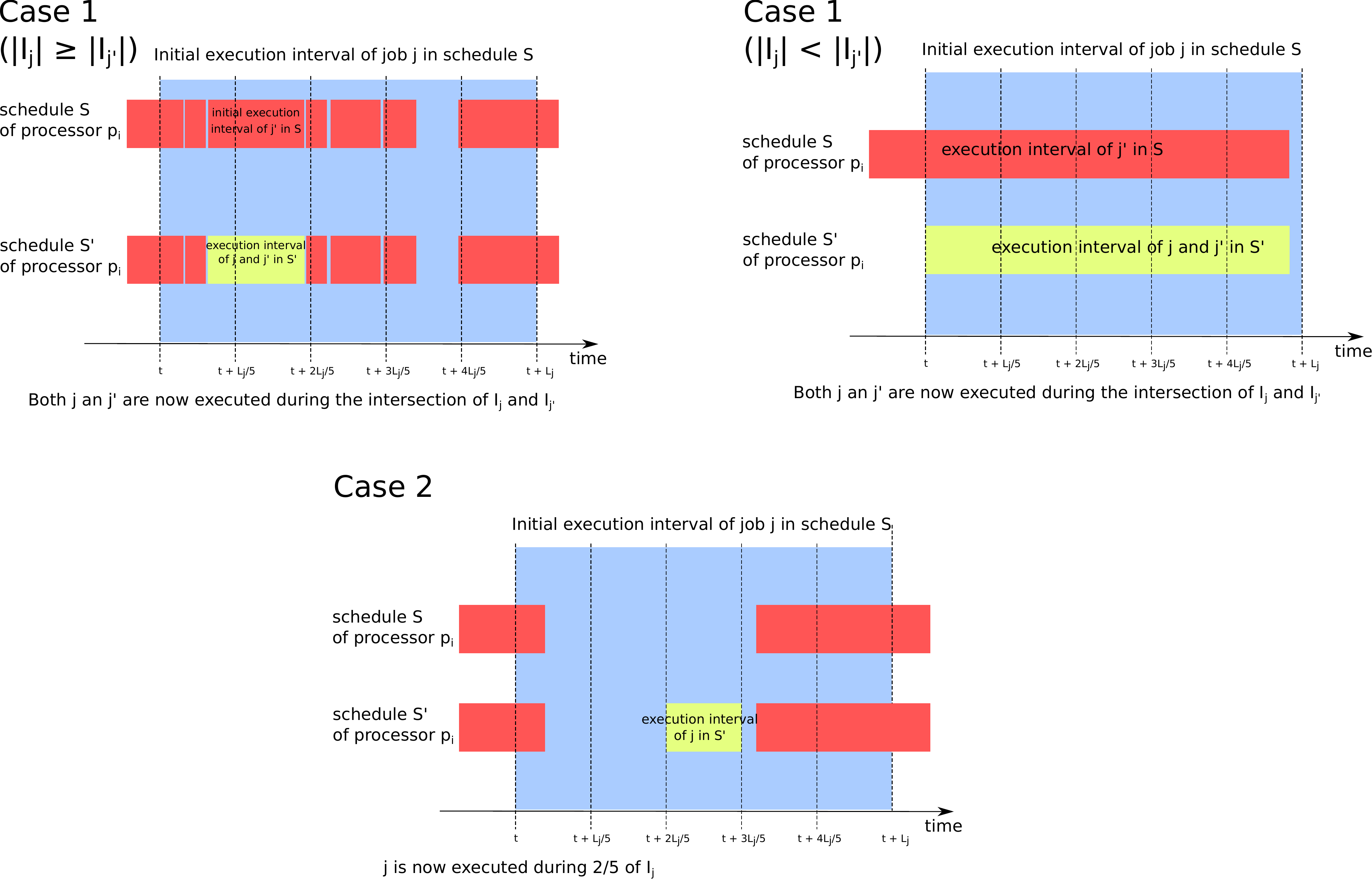}
  \end{center}
  \caption{The two different cases for lemma \ref{lemma:independent}. $S$ is the original schedule and $S'$ the schedule we build during the proof
    of the lemma.}
  \label{fig:independent}
\end{figure}

Looking at Lemma~\ref{lemma:independent}, we  notice that the jobs whose cost
  has changed between $S$ and $S'$ are, in $S'$, now executed on $p_i$ during the life interval of an element of $\mathcal{J}_i$. Looking at Lemma~\ref{lemma:cut}, we notice that the jobs whose cost
  has changed between $S'$ and $S''$ are, in $S'$, executed on $p_i$ in an interval that overlaps one of the $t_{\ell}^i$. Thus for every  $j$ we  have   $E(S,j)=E(S',j)$ or $E(S',j)=E(S'',j)$. Hence the cost of a job
after the two tranformations from $S$ to $S'$ to $S''$  increases by a factor of at most:
$$\frac{\cost (S'')}{\cost (S)}\leq \max\left(\left(\frac{5}{2}\right)^{\alpha -1} \cdot \left(1 + \frac{w_{\max}}{w_{\min}}\right)^{\alpha}, 2^{\alpha-1} \right) =
\left(\frac{5}{2}\right)^{\alpha -1} \cdot \left(1 + \frac{w_{\max}}{w_{\min}}\right)^{\alpha}.$$
Finally, observe that the last step of Algorithm~\ref{alg:equalweight} transforms the schedule into a non-preemptive schedule that is feasible and has the same cost. Putting Lemma~\ref{lemma:independent}, Lemma~\ref{lemma:cut} and Theorem~\ref{thm:previous} together, we obtain an approximation ratio of
$$\left(\frac{5}{2}\right)^{\alpha -1} \cdot \left((1+\varepsilon)\left(1 + \frac{w_{\max}}{w_{\min}}\right)\right)^{\alpha} \cdot \tilde{B}_{\alpha}.$$
This completes the proof of Theorem \ref{EVJ}.

\begin{cor}{There is a polynomial time algorithm which computes a $2(1+\varepsilon)(5(1+\varepsilon))^{\alpha-1}\tilde{B}_\alpha$-approximation
to the  \textsc{Non preemptive minimum energy scheduling}($\alpha$) for jobs of equal work volume.}
\end{cor}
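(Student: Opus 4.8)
The plan is to obtain this as an immediate specialization of Theorem~\ref{EVJ} to the case of equal work volumes, so no new algorithm is required: the algorithm is exactly Algorithm~\ref{alg:equalweight}, whose polynomial running time is already guaranteed (it greedily constructs the $m$ maximal independent sets, builds the auxiliary instance of Problem~\ref{FHproblem}, and then calls the polynomial-time procedure of Bampis et al.~\cite{BKLLS} from Theorem~\ref{thm:previous}). Thus the entire content of the corollary is the simplification of the approximation factor when every job carries the same work volume.

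First I would observe that if all jobs share a common work volume $w$, then $w_{\max}=w_{\min}=w$, so $w_{\max}/w_{\min}=1$ and hence $1+\frac{w_{\max}}{w_{\min}}=2$. Substituting this value into the general bound $\equalvolapprox$ of Theorem~\ref{EVJ} turns the factor $\bigl((1+\varepsilon)(1+\frac{w_{\max}}{w_{\min}})\bigr)^{\alpha}$ into $\bigl(2(1+\varepsilon)\bigr)^{\alpha}=2^{\alpha}(1+\varepsilon)^{\alpha}$.

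The remaining step is routine exponent bookkeeping, which is the only place any care is needed. Combining the powers of $2$ and $5$ gives
$$\left(\frac{5}{2}\right)^{\alpha-1}\tilde{B}_{\alpha}\cdot 2^{\alpha}(1+\varepsilon)^{\alpha} = \frac{5^{\alpha-1}}{2^{\alpha-1}}\cdot 2^{\alpha}\cdot(1+\varepsilon)^{\alpha}\tilde{B}_{\alpha} = 2\cdot 5^{\alpha-1}(1+\varepsilon)^{\alpha}\tilde{B}_{\alpha},$$
and I would then check that this matches the claimed form $\strictlyequalvol$ by noting that $2(1+\varepsilon)\bigl(5(1+\varepsilon)\bigr)^{\alpha-1}\tilde{B}_{\alpha}=2\cdot 5^{\alpha-1}(1+\varepsilon)^{1+(\alpha-1)}\tilde{B}_{\alpha}=2\cdot 5^{\alpha-1}(1+\varepsilon)^{\alpha}\tilde{B}_{\alpha}$, which coincides with the expression above. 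There is no genuine obstacle here: the proof is a direct corollary, and the only thing to verify is the arithmetic identity matching the two equivalent ways of writing $2\cdot 5^{\alpha-1}(1+\varepsilon)^{\alpha}\tilde{B}_{\alpha}$.
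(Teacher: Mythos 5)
Your proposal is correct and follows exactly the paper's route: the corollary is obtained by specializing Theorem~\ref{EVJ} to the case $w_{\max}=w_{\min}$, so that $1+\frac{w_{\max}}{w_{\min}}=2$, and then simplifying $(\frac{5}{2})^{\alpha-1}(2(1+\varepsilon))^{\alpha}\tilde{B}_{\alpha}$ to $2(1+\varepsilon)(5(1+\varepsilon))^{\alpha-1}\tilde{B}_{\alpha}$, which is precisely the remark the paper makes immediately after stating the theorem. Your exponent bookkeeping is accurate, and no new algorithmic content is needed beyond Algorithm~\ref{alg:equalweight}.
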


\paragraph{Remark.} By an easy reduction, Theorem \ref{EVJ}  extends to the case where the work volume of each job, $w_{ij}$, depends on the processor $p_i$ on which job $j$ is executed, up to losing an additional factor of $(w_{\max} / w_{\min})^\alpha $ in the approximation ratio.

\section{Hardness of approximation}
\label{section:apx}

In this section, we show the following problem mentioned in the
previous remark is APX-hard by a reduction from \textsc{Maximum
Bounded 3-Dimensional Matching}.

\begin{prob}
  \textsc{Minimum energy scheduling with processor dependent works} $(\alpha)$\\
  \textbf{Input:}\\
  - $m$ processors $P=\{p_1,p_2,\ldots ,p_m\}$.\\
  - $n$ jobs $J=\{1,\ldots,n\}$,  a life interval $L_j=[r_j, d_j]$ and a work volume $w_{ij}$ for each $j \in J,~p_i \in P$.\\
\textbf{Output:}
  An assignemnt $S$  of a processor $p(j)$ and an execution interval $[s_j,e_j] \subseteq [r_j, d_j]$ for each job $j$   such that for each pair of jobs $j_1,j_2$, if $p(j_1) = p(j_2)$ then
  $[s_{j_1}, e_{j_1}] \cap [s_{j_2}, e_{j_2}] = \emptyset$. \\
  \textbf{Objective:} Minimize
$\displaystyle E(S)=  \sum_{p_i} \sum_{j:p(j)=p_i} (e_j-s_j)\left( \frac{ w_{ij}}{e_j-s_j}\right) ^{\alpha}$.
\label{APXproblem}
\end{prob}

\begin{thm}\label{thm:apx_hard}
  Minimum energy scheduling with processor dependent work is APX-Hard.
\end{thm}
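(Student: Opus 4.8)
The plan is to give a gap-preserving reduction from \textsc{Maximum Bounded 3-Dimensional Matching}, which is APX-complete: for a ground set $X\cup Y\cup Z$ with $|X|=|Y|=|Z|=q$ and a set $T$ of triples in which every element occurs in at most a constant number $B$ of triples, it is NP-hard to distinguish instances admitting a perfect matching (of size $q$) from instances in which every matching has size at most $(1-\delta)q$, for an absolute constant $\delta>0$.

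A first, purely analytic step simplifies the target problem. Since in Problem~\ref{APXproblem} I will use a single common life interval $[0,1]$, once the set of jobs assigned to a processor $p_i$ is fixed, the cheapest schedule runs each of them at constant speed on a slot proportional to its work; by convexity of $s\mapsto s^\alpha$ this costs exactly $(\sum_{j:p(j)=i} w_{ij})^\alpha$, and preemption does not help. Thus the problem becomes the assignment problem of minimizing $E=\sum_i(\sum_{j:p(j)=i}w_{ij})^\alpha$, the sum of the $\alpha$-th powers of the processor loads; for fixed total load over a fixed number of processors this is minimized, by strict convexity, when all loads are equal.

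The reduction then follows the Lenstra--Shmoys--Tardos template, using \emph{dummy jobs} to rule out the ``spreading'' solutions that convexity would otherwise favour. I create one processor $P_t$ per triple $t\in T$ (so $N=|T|\le Bq$ processors). For each element $e$ I create an \emph{element job} of work $1$ on every $P_t$ with $e\in t$ and work $4$ on all other processors. I also create $N-q$ \emph{dummy jobs}, each of work $3$ on every processor. All works lie in $\{1,3,4\}$, so $r=4$. The intended optimum realises load exactly $3$ on every processor: the $N-q$ dummies each occupy a distinct processor, and the remaining $q$ processors each receive the three work-$1$ element jobs of their own triple. Such an all-$3$ schedule exists if and only if those $q$ triples partition $X\cup Y\cup Z$, i.e. iff a perfect matching exists, in which case $E=N\cdot 3^\alpha$, the global minimum of the assignment problem (total load $3N$ over $N$ processors).

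The hard part is the lower bound on the NO side, i.e. turning a matching deficit into a constant-factor energy gap. The point is that $N\cdot 3^\alpha$ is attainable \emph{only} through the all-$3$ configuration, which encodes a perfect matching; hence if the maximum matching has size at most $(1-\delta)q$, at least $\Omega(\delta q)$ processors must deviate from load $3$ --- either by carrying a fourth unit of work (a load-$4$ processor, paying $4^\alpha+2^\alpha-2\cdot3^\alpha>0$ against two balanced processors) or by hosting a misplaced element at work $4$ (paying at least $4^\alpha-3^\alpha>0$, and also raising the total load, which only increases $E$ by the power-mean inequality). Each deviation costs a constant by strict convexity, so $E\ge N\cdot 3^\alpha+\Omega(\delta q)$; since bounded occurrences give $N=\Theta(q)$, this is a factor $(1+\gamma)$ above the YES value for some constant $\gamma=\gamma(\alpha,\delta,B)>0$, establishing APX-hardness. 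The two delicate points I must handle carefully are (i) that, because works are capped at $4$ rather than $+\infty$, the optimum could try to rebalance through many work-$4$ misplacements, so I must show these are themselves counted as deviations and cannot defeat the gap, and (ii) the combinatorial counting that converts ``no large matching'' into $\Omega(\delta q)$ \emph{unavoidable} deviations, which is where the structure of bounded $3$-dimensional matching is really used.
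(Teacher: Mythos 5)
Your construction is essentially the paper's: a reduction from bounded 3-dimensional matching with element jobs of work $1$ on triple machines containing them and $4$ elsewhere, dummy jobs of work $3$, a common life interval, and the observation that by convexity the problem collapses to minimizing $\sum_i(\mathrm{load}_i)^\alpha$. (The paper uses $3q$ machines --- the $|{\cal T}|$ triple machines plus $3q-|{\cal T}|$ dummy machines --- and $2q$ dummy jobs, rather than your $|{\cal T}|$ machines and $|{\cal T}|-q$ dummies; this difference is cosmetic.) The genuine problem is that the two ``delicate points'' you defer are not side issues: they are the entire technical content of the proof, and the in-line justifications you do give for them would not survive scrutiny. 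Your local charge of ``at least $4^\alpha-3^\alpha$'' per misplaced work-$4$ job is not valid as stated, because the comparison is not against a configuration in which every other machine has load $3$: other machines may be underloaded to compensate, so per-machine charges do not simply add up. And the power-mean remark only speaks to deviations that raise the total load; it says nothing about the other (and more important) kind of deviation, a machine carrying one or two correctly placed work-$1$ elements, which keeps the total load unchanged. So as written, the proposal is a plan whose two key lemmas are announced but never proved.

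The paper fills these holes as follows. First, a preprocessing step (Lemma~\ref{lem:elementjobs}) disentangles the misplacement issue from the counting: any schedule can be transformed, \emph{without increasing the energy}, so that every element job runs at work $1$ on a triple machine containing its element --- if the target machine carries a dummy or a work-$4$ element job, swap the two (the target's load drops by $2$, the source's does not increase); otherwise move the job outright (the target then has load at most $3$, costing less than the moved job did alone). After this, work-$4$ placements vanish and the total load is exactly $9q$. Second, the counting: letting $m_k$ be the number of machines carrying exactly $k$ element jobs, one has $|g(S)|=m_3$, and the identity $m_1+2m_2+3m_3=3q$ gives $m_1+m_2\ge\frac{3}{2}(q-m_3)$; each such machine has load $\le 2$ or $\ge 4$, and a convexity relaxation charges each of them at least $\bigl(\frac{1}{2}2^\alpha+\frac{1}{2}4^\alpha-3^\alpha\bigr)/3^{\alpha-1}>0$ beyond $\mathrm{OPT}(I')$, yielding $E(S)-\mathrm{OPT}(I')\ge c\,(\mathrm{OPT}(I)-|g(S)|)$, i.e.\ an L-reduction. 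If you wanted to avoid the preprocessing lemma and count misplacements directly, as your sketch suggests, you would need a global convexity argument in its place --- for instance the tangent-line bound $\ell^\alpha\ge 3^\alpha+\alpha 3^{\alpha-1}(\ell-3)+c_0$ valid for all loads with $|\ell-3|\ge 1$, summed over all machines and combined with total load $\ge 3N$, together with the observation that a load-$3$ machine carries either a single dummy or the three elements of its own triple (so the number of load-$3$ machines is at most the number of dummies plus the maximum matching). That route can be made to work, but none of it appears in the proposal.
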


\begin{prob}
  \textsc{Maximum Bounded 3-Dimensional Matching }
  \\
  \textbf{Input:} Three sets $A,B,C$ of equal cardinality, and a subset ${\cal T}$ of $A \times B \times C$, such that every element in $A\cup B\cup C$ appears in at least one and at most  three elements of ${\cal T}$.

\noindent
  \textbf{Output:}
A subset ${\cal T}'$ of ${\cal T}$, such that each element of $A\cup B\cup C$ occurs in at most one triple of ${\cal T}'$.
  ~\\
  \textbf{Objective:} Maximize the cardinality of ${\cal T}'$.
\end{prob}

\begin{thm}\label{thm:petrank}
 \cite{Petrank} Max-Bounded-3DM is APX-hard, even when restricted to
instances where the optimal solution has size $q = |A| = |B| = |C|$.
\end{thm}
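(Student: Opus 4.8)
The plan is to exhibit a gap-preserving reduction from a bounded-occurrence, gap version of \textsc{Max-3SAT} to \textsc{Max-Bounded-3DM} so that satisfiable formulas map to instances admitting a \emph{perfect} matching (of size exactly $q=|A|=|B|=|C|$), while formulas far from satisfiable map to instances in which a constant fraction of the elements must be left uncovered. Concretely, I would start from the promise problem, NP-hard by the PCP theorem together with the expander-based degree-reduction of Papadimitriou and Yannakakis, of distinguishing a satisfiable 3CNF formula $\varphi$ (with $m$ clauses, each variable occurring in at most a fixed constant number $B$ of clauses) from one in which no assignment satisfies more than $(1-\varepsilon_0)m$ clauses, for some absolute constant $\varepsilon_0>0$. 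The bounded occurrence of variables is exactly what will let me keep every element of the constructed 3DM instance in a bounded number of triples, and the ``satisfiable versus $(1-\varepsilon_0)$-unsatisfiable'' gap is what places the hardness at the optimum.

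Next I would build the 3DM instance gadget by gadget. For each variable $x$ occurring in $k\le B$ clauses, I would place a small ring-shaped truth-setting gadget whose core elements admit exactly two internal perfect matchings, encoding $x=\textsc{true}$ and $x=\textsc{false}$; each matching frees a prescribed set of ``literal tips'', one per occurrence of the corresponding literal. For each clause $c$ I would add a testing gadget that can be completed into the matching if and only if at least one of the three literal tips feeding it has been made available. Finally, since a true literal may have several satisfying occurrences while a clause needs only one, I would attach garbage-collection gadgets that absorb every unused-but-available tip, so that in a satisfying assignment \emph{every} element can be covered. Throughout I would design each gadget so that its elements lie in at least one and at most three triples; because variables occur at most $B$ times, inter-gadget interfaces contribute only a bounded number of triples per element, and the global bound of three can be forced by splitting any high-degree element through short ``pipe'' chains of auxiliary triples.

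For correctness I would prove the two directions that together yield a gap-preserving ($L$-type) reduction. Completeness: from a satisfying assignment, matching each variable gadget according to its truth value, each clause gadget through a satisfied literal, and all leftover tips through the collectors, produces a perfect matching, so the optimum equals $q$ exactly. Soundness: from any matching $\mathcal{T}'$ I would read off an assignment (breaking ties arbitrarily when a variable gadget is matched inconsistently) and argue that each fully covered clause gadget forces one of its literals true, so the number of uncovered elements is at least a constant times the number of unsatisfied clauses; hence a matching of size $q-t$ leaves at most $O(t)$ clauses unsatisfied. Combining with the source gap, a satisfiable $\varphi$ gives optimum $q$, whereas a $(1-\varepsilon_0)$-unsatisfiable $\varphi$ forces optimum at most $q-\Omega(\varepsilon_0 m)=(1-\delta)q$, using $q=\Theta(m)$ from bounded occurrence; this constant multiplicative gap at the optimum is precisely APX-hardness with the claimed structure. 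To meet the syntactic requirements I would pad the coordinate sets with dummy elements and dummy triples so that $|A|=|B|=|C|$ and every element lies in at least one triple, ensuring the padding is matchable in both cases so the gap is preserved.

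The delicate part, and the main obstacle, is satisfying all gadget requirements simultaneously under the hard degree constraint: the truth-setting, testing, and garbage-collection gadgets must interlock so that a perfect matching exists exactly when $\varphi$ is satisfiable, yet no element may appear in more than three triples. Transmitting one variable's truth value to all $k$ of its occurrences while keeping degree three is what forces the pipe/splitting construction and the bounded-occurrence starting point; verifying that these chains create no spurious matchings and do not destroy the exact ``optimum $=q$'' property in the completeness direction is the crux of the argument.
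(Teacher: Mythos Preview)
The paper does not prove this theorem at all: it is stated with a citation to Petrank and used as a black box in the proof of Theorem~\ref{thm:apx_hard}. There is therefore no ``paper's own proof'' to compare your proposal against.

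That said, your outline is the classical route to this result and is broadly correct in spirit: start from a gap version of bounded-occurrence \textsc{Max-3SAT} (PCP plus expander-based degree reduction), build ring-shaped truth-setting gadgets for variables, clause-testing gadgets, and garbage collectors, and verify completeness (perfect matching when $\varphi$ is satisfiable) and a linear soundness loss. This is essentially the approach of Kann and Petrank. What you have written, however, is a plan rather than a proof: the actual gadget specifications are omitted, and the hard work --- which you correctly identify as the main obstacle --- lies entirely in those details. In particular, enforcing the degree bound of three triples per element while simultaneously ensuring (i) the variable ring has exactly two internal perfect matchings, (ii) the garbage collection absorbs \emph{exactly} the surplus tips without creating alternative matchings that bypass clause gadgets, and (iii) the padding to equalize $|A|,|B|,|C|$ does not disturb the gap, all require concrete constructions and case analyses that your sketch defers. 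For the purposes of this paper none of that is needed, since the result is imported; if you intend to supply a self-contained proof, you would need to write out the gadgets explicitly and verify the soundness accounting constant by constant.
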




\begin{proof}[of Theorem \ref{thm:apx_hard}]
We describe a reduction from the Maximum Bounded 3-Dimensional Matching problem (Max-Bounded-3DM) with optimal solution of size $q$,
to the  \textsc{Minimum energy scheduling with processor dependent works} $(\alpha)$ problem.
Our construction draws some ideas from Azar et al. \cite{Azar_Epstein_Richter_all-norm} and
Lenstra et al. \cite{Lenstra_Shmoys_Tardos_approx_unrelated}.

We first define a function $f$ that maps each instance $I$ of Max-Bounded-3DM to
an instance $I'$ of  \textsc{Minimum energy scheduling with processor dependent works} $(\alpha)$ problem,
and a function $g$ that maps a feasible schedule for $I'$ to a feasible solution for $I$.
Both are computable in polynomial time.

The function $f$ creates a scheduling instance $I'$ as follows.
\begin{itemize}
\item  There are $3q$ machines:  $|{\cal T}|$ machines, one for each triple $T$,
 called {\em triple} machines,  and $3q-|{\cal T}|$ identical machines  called {\em dummy} machines.
\item There are  $5q$ jobs:  $3q$ jobs,  one job $j(e)$ for each element
 $e$ of $A\cup B\cup C$  called {\em element} jobs. Each element job $j(e)$ has work  $1$ on every triple machine whose corresponding $T$ contains $e$,  and work  $4$ on all other machines. The remaining $2q$ jobs are called  \emph{dummy} jobs, and they all have work $3$ on
all machines. All jobs have release date 0 and deadline $3$.
\end{itemize}

Let ${\cal T}^*$ denote an optimal set of triples for instance $I$. Recall that $|T^*|=q$ and consider the following schedule for  $I'$:
For each triple $T \in {\cal T}^*$
we schedule the three element jobs $j(e)$ with $e\in T$ on the triple machine for $T$.
Then, we schedule each of the $2q$ dummy jobs on one of the remaining $3q-|{\cal T}^*|=2q$ machines.
In the resulting schedule, every machine has to execute a total work load  of $3$  within $3$ units of time, so the energy of this schedule is $3\cdot 3q = 9q$. Therefore,
\begin{equation}\label{eq:boundOPT}
\text{OPT}(I') \le 9 \cdot \text{OPT}(I).
\end{equation}

To define $g$, given a feasible schedule $S'$ for an instance $I'=f(I)$,
we first apply the algorithm in Lemma \ref{lem:elementjobs} to $S'$ which constructs a new schedule $S$ with $E(S) \le E(S')$ and where each element job $j(e)$ is processed on a triple machine for $T$ with $e\in T$.
Then, we construct a 3-Dimensional Matching for $I$ by picking all triples $T$ whose three elements are all processed
on the machine for $T$ in $S$, so $|g(S')|$ is the number of triple machines that process all three of the corresponding jobs in $S$. 

We now show that there exists a positive constant $\beta$ such that for every schedule $S'$ for $I'$,
$ \text{OPT}(I) - |g(S')|  \le \beta ( E(S') - \text{OPT} (I') )$.
Since $E(S) \le E(S')$ and  $g(S) = g(S')$, 
it suffices to show that $ \text{OPT}(I) - |g(S)|  \le \beta ( E(S)- \text{OPT} (I') )$.


For $k = 0, 1, 2, 3$ let $m_k$ denote the number of machines in schedule $S$ that process exactly $k$ element jobs.  We have: $ \text{OPT}(I) - |g(S)| =q-m_3$.  In $S$ the total work load of all machines together is exactly $9q$, since each element job has work $1$. Only the machines with $0$ or $3$ element jobs may have work load equal to $3$ in $S$. All other machines have work load either less than or equal to  2 or greater than or equal to 4. Let $\ell_1,\ell_2,\ldots ,\ell_{3q}$ be the vector of the  loads on all the machines. A relaxation of the energy minimization problem is:
$$\min \sum_i \ell_i^\alpha / 3^{\alpha -1}  \hbox{ s.t. }\left\{ \begin{array}{ll}
\sum_i \ell_i &=9q\\
\ell_1=\cdots =\ell_{m_0+m_3} &= 3\\
\forall i>m_0+m_3, ~~\ell_i\leq 2 &\hbox{ or }\ell_i\geq 4
\end{array} \right. $$
By convexity of $\ell\mapsto \ell^\alpha$, the optimal solution  of the relaxation cannot have one load strictly less than 2 and another strictly more than 4, and neither can it have two different loads that are both $\geq 4$ or both $\leq 2$, so the only possibilities are all machines to have loads in $\{ 3,2 ,4\}$,  $\{ 3,1,4\}$ and  $\{3,2,x\}$ with $x>4$. By convexity of $\ell\mapsto \ell^\alpha$, one can check that the first possibility minimizes the energy, so we have
$$E(S)\geq [(m_0+m_3)3^{\alpha}+(3q-m_0-m_3)( (1/2)2^{\alpha}+(1/2)4^\alpha)]/3^{\alpha -1},$$
and so
$$E(S)- \text{OPT} (I') \geq (3q-m_0-m_3)( (1/2)2^{\alpha}+(1/2)4^\alpha - 3^\alpha)/3^{\alpha -1}.$$
Since the total number of element jobs as well as the total number of machines is $3q$, we have
$$m_1 + 2 \cdot m_2 + 3 \cdot m_3  = 3q =m_0+m_1+m_2+m_3,$$
and we deduce  $3q-m_0-m_3=m_1+m_2\geq (3/2)(q-m_3)$, hence
$$E(S)- \text{OPT} (I') \geq (3/2) \frac{ (1/2)2^{\alpha}+(1/2)4^\alpha - 3^\alpha)}{3^{\alpha -1}} ( \text{OPT}(I) - |g(S)|).$$
Noticing that $x\mapsto x^{\alpha}$ is strictly convex for $\alpha >1$, we obtain the desired inequality.
\qed \end{proof}

\begin{lem}\label{lem:elementjobs}
There exists a polynomial time algorithm which takes a schedule $S'$ for an instance  $I'=f(I)$ (for an instance $I$ of Max-3DM) and builds a feasible schedule $S$ whose cost satisfies $E(S) \le E(S')$, and where each element job $j(e)$ is scheduled on a triple machine
  whose triple $T$ contains $e$.
\end{lem}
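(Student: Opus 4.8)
The plan is to reduce the statement to a purely combinatorial load-balancing question about the assignment of jobs to machines, and then to repair misplaced element jobs one at a time by local exchanges. For an assignment $\sigma$ of the $5q$ jobs to machines, let $\ell_M(\sigma)$ be the total work placed on machine $M$ (an element job contributes $1$ on a good triple machine and $4$ elsewhere, a dummy job contributes $3$ everywhere), and set $\Phi(\sigma)=\sum_M \ell_M(\sigma)^\alpha/3^{\alpha-1}$. First I would record two facts. Since all jobs share the window $[0,3]$ and, on a single machine carrying total work $W$, the minimum non-preemptive energy is attained by running every job at the common speed $W/3$ (splitting $[0,3]$ into disjoint pieces proportional to the works), every feasible schedule satisfies $E(S')\ge \Phi(\sigma_{S'})$, where $\sigma_{S'}$ is the assignment induced by $S'$; conversely, any assignment $\sigma$ is realized by a feasible non-preemptive schedule of energy exactly $\Phi(\sigma)$ via this constant-speed construction. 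It therefore suffices to produce an assignment $\sigma^\ast$ in which every element job sits on a good triple machine and $\Phi(\sigma^\ast)\le \Phi(\sigma_{S'})$: the schedule $S$ realizing $\sigma^\ast$ then satisfies $E(S)=\Phi(\sigma^\ast)\le \Phi(\sigma_{S'})\le E(S')$.

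Starting from $\sigma_{S'}$, I would repeatedly select a misplaced element job $j(e)$ (sitting on some machine $M$ with work $4$) together with a triple machine $T=(a,b,c)$ such that $e\in T$ (one exists because $e$ lies in at least one triple), and perform a single exchange that does not increase $\Phi$ while strictly decreasing the number of misplaced element jobs. Since that count is a nonnegative integer at most $3q$ and dummy jobs are never misplaced, the loop terminates after at most $3q$ polynomial-time steps at the desired $\sigma^\ast$. The exchange splits into cases according to what currently occupies $T$, keeping in mind that $j(e)\notin T$. If $T$ carries a dummy job, I swap it with $j(e)$, after which $\ell_M$ drops by $1$ and $\ell_T$ by $2$, so $\Phi$ strictly decreases. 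If $T$ carries a misplaced element job $j(e')$, I swap the two: $\ell_T$ drops by $3$ and $\ell_M$ does not increase, while $j(e)$ becomes correctly placed and $j(e')$ was already misplaced, so the misplaced count falls.

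The delicate case is when $T$ carries neither a dummy nor a foreign element job. Then every job on $T$ is a correctly placed element job drawn from $\{j(a),j(b),j(c)\}$, and since $j(e)$ is itself one of these three but is absent from $T$, the load $\ell_T$ is at most $2$. Moving $j(e)$ onto $T$ raises $\ell_T$ from at most $2$ to at most $3$ while lowering $\ell_M\ge 4$ by $4$, and the convexity estimate $(\ell_T+1)^\alpha-\ell_T^\alpha\le 3^\alpha-2^\alpha<4^\alpha\le \ell_M^\alpha-(\ell_M-4)^\alpha$ shows $\Phi$ strictly decreases. This last case is exactly where the main obstacle lies: a naive relocation of a misplaced job to a good machine can increase $\sum_M\ell_M^\alpha$, and the argument survives only because of the structural observation that an element job $j(e)$ is good for a triple machine precisely when it is one of that machine's three admissible element jobs, which forces the target load down to at most $2$ once dummy and foreign element jobs are excluded. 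Checking that this case analysis is exhaustive and that each step simultaneously preserves feasibility (through the realizability fact) and makes progress is the crux of the argument.
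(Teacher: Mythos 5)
Your proof is correct and follows essentially the same route as the paper: repair misplaced element jobs one at a time, swapping with a dummy or misplaced (work-$4$) element job on the target triple machine when one exists, and otherwise moving directly using the key observation that the target then carries only correctly placed element jobs of its own triple, hence load at most $2$. Your explicit potential $\Phi(\sigma)=\sum_M \ell_M^\alpha/3^{\alpha-1}$ and the misplaced-job counting argument are just a cleaner formalization of the paper's ``rebalance to constant speed'' accounting and its single-pass loop invariant.
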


\begin{proof}
  The algorithm proceeds as follows:\\
  \begin{tabular}{l}
    $S\gets S'$.\\
    For each element job $j(e)$,\\
    \quad If $j(e)$ is scheduled on a machine $p_i$ that is not a triple machine whose triple contains $e$, then \\
    \quad \quad Find any triple machine $p_{i_e}$ whose triple $T$ contains $e$.\\
    \quad \quad If $S$ schedules a job $j$ on $p_{i_e}$ that is either a dummy job or an element job of work volume 4, then\\
    \quad \quad \quad Swap $j$ and $j(e)$ in $S$ (and rebalance $p_i$ and $p_{i_e}$ so they run at constant speed).\\
    \quad \quad Otherwise,\\
    \quad \quad \quad Schedule $j(e)$ on $p_{i_e}$ instead of $p_i$ in $S$ (and rebalance as needed.)\\
  \end{tabular}

  \smallskip

  Since element jobs are never moved more than once, the algorithm produces a feasible schedule $S$ where all element jobs $j(e)$ are on a triple machine whose triple $T$ contains $e$. It remains to show that $E(S) \le E(S')$. We now show this is true for each iterations
of the outer \emph{for} loop of the algorithm.

  If the algorithm swaps $j(e)$ and $j$ then this swap reduces the total work load on $p_{i_e}$ from, say $\ell(p_{i_e})$ to $\ell(p_{i_e})-3+1=\ell(p_{i_e})-2$ and changes the total work load on $p_i$ from $\ell(p_i)$ to at most $\ell(p_i)-4+4=\ell(p_i)$.

  Otherwise, no dummy jobs or element jobs of work  4 is scheduled on $p_{i_e}$. The total work load on $p_{i_e}$ in this new schedule is at most $3 \cdot 1$, which is less than the cost of $j(e)$ alone in the old schedule.

  Since each job is considered once and all decisions only depend on some jobs' work, the algorithm run in polynomial time (all rebalancing can be done at the very end instead).
\qed\end{proof}

%

\begin{rem}
   The proof above does not assume that preemption is forbidden and so, it applies for both the preemptive  and non-preemptive cases.
   Moreover, since all the jobs have common release dates and deadlines the instance is \emph{agreeable}.
   Also, notice that the ratio $\frac{w_{\max}}{w_{\min}}$ equals 4 and so, the problem with constant ratio for which we gave an approximation
   algorithm in the previous section is already APX-Hard.
\end{rem}

\section{Single processor scheduling}\label{sec:single-proc}
In this section, we present a new LP relaxation for \textsc{Non preemptive minimum energy scheduling ($\alpha$)} on a single processor. 
\begin{thm}\label{thm:intgap}
The linear program  \textbf{LP1} has integrality gap at most $\lpgap$.
\end{thm}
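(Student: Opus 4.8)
The plan is to show that any feasible fractional solution $x$ of \textbf{LP1} can be rounded to a feasible non\-preemptive schedule whose energy is at most $\lpgap$ times the LP cost of $x$; applied to an optimal fractional solution this bounds the integrality gap. I write the LP in its configuration form, so its variables $x_{j,I}$ select an execution interval $I\subseteq[r_j,d_j]$ for job $j$, with $\sum_I x_{j,I}=1$, cost $c_{j,I}=w_j^\alpha |I|^{1-\alpha}$, and the added non\-preemption constraint $\sum_{j}\sum_{I\ni t}x_{j,I}\le 1$ at every time $t$ (the fractional relaxation of disjointness, since on a single processor $m=1$ feasibility just means the chosen intervals are pairwise disjoint). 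The key reduction is that $\alpha$ enters only once: by Proposition~\ref{lemma:half}, shrinking an interval $I$ to a subinterval of length $|I|/\rho$ multiplies $E(S,j)$ by exactly $\rho^{\alpha-1}$. Hence it suffices to prove the \emph{combinatorial} statement that one can pick, for each job, a single LP interval $I_j$ with $x_{j,I_j}>0$ and then carve pairwise\-disjoint execution subintervals, one inside a translate of each $I_j$, so that every selected length shrinks by a total factor at most $12$. Then $E(S)=\sum_j E(S,j)\le 12^{\alpha-1}\sum_j c_{j,I_j}\le 12^{\alpha-1}\cdot(\text{LP cost})$, provided the selection in the next step is charged correctly to $x$.

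The first step is selection by \emph{filtering on length}. For each $j$ I would choose $I_j$ so that $c_{j,I_j}$ is within a constant factor of the fractional cost $\sum_I c_{j,I}x_{j,I}$. Since $c_{j,I}$ is decreasing in $|I|$, a Markov/quantile argument applies: discard the shortest intervals carrying a fixed fraction (say one half) of $j$'s mass and pick $I_j$ among the surviving long, hence cheap, intervals. This bounds the cost blow\-up by the inverse of the surviving fraction while guaranteeing that $|I_j|$ is at least a fixed multiple of a ``typical'' length of $j$. The subtlety is that selecting \emph{long} intervals is exactly what makes disjoint packing harder, so the quantile threshold must be tuned in tandem with the packing step.

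The second step is packing the selected intervals disjointly, and this is where the non\-preemption constraint is indispensable: by Lemma~\ref{lem:gap_lp1} the base compact LP without it has unbounded gap, so the argument must exploit $\sum_{j}\sum_{I\ni t}x_{j,I}\le 1$. I would group the $I_j$ into dyadic length classes $|I_j|\in[2^k,2^{k+1})$; within one class the lengths are comparable, and the capacity constraint bounds the total length of selected intervals covering any point by a constant, so within each class the intervals can be shifted inside their life intervals and shrunk by a bounded factor to become pairwise disjoint (an earliest\-deadline\-first / interval\-coloring argument). Interleaving the classes across dyadic scales contributes a further geometric\-series factor; combined with the filtering loss, the cumulative shrink is kept to $12$, giving $\lpgap$.

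The main obstacle is precisely this packing step: converting the fractional ``at most one job at each time'' guarantee into genuinely disjoint integral intervals while (i) keeping each $I_j$ inside its life interval $[r_j,d_j]$ and (ii) controlling the total shrink against the filtering loss so their product is at most $12$. The delicate coordination is making the quantile threshold in the selection and the per\-class overlap bound in the packing multiply to a constant that is independent of $n$ and of the work volumes $w_j$; this independence is exactly what separates the claimed $\lpgap$ gap from the earlier $(1+w_{\max}/w_{\min})^\alpha$\- and $n$\-dependent bounds.
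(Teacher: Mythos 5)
Your proposal has a fatal flaw at its foundation: you have misidentified which constraint is the non-preemption constraint. You write the LP with $\sum_I x_{j,I}=1$ and ``$\sum_j \sum_{I\ni t} x_{j,I} \le 1$ at every time $t$'', and call the latter the added non-preemption constraint. It is not: that is constraint (\ref{lp1-no-overlap}), the capacity constraint already present in the compact LP for the preemptive setting. The constraint that \textbf{LP1} adds to capture non-preemption is (\ref{lp1-non-preempt}): for every job $j$ and interval $I$, $\sum_{I' \cap I \neq \emptyset} x_{I',j} + \sum_{I \subseteq I'',\, j' \in J} x_{I'',j'} \le 1$, i.e.\ if $j$ is fractionally executed in intervals meeting $I$, then no other job may be executed in an interval containing $I$. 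Your filtering-plus-packing argument only ever invokes constraints (\ref{lp1-job-assigned}) and (\ref{lp1-no-overlap}); constraint (\ref{lp1-non-preempt}) never actually enters. Hence, if your rounding were correct, it would bound the integrality gap of the LP \emph{without} constraint (\ref{lp1-non-preempt}) by a constant, contradicting Lemma~\ref{lem:gap_lp1}. The lower-bound instance of that lemma shows concretely where your packing step fails: with $n$ unit small jobs on $[i-1,i]$, one big job of work $n$ on $[0,n]$, and each job split half-half over its life interval, your filter necessarily selects for the big job an interval of length $n/2$, which overlaps $\Theta(n)$ pairwise disjoint selected small intervals; since each small job is confined to its unit slot, any integral schedule forces the big job into a gap of length $O(1)$, a shrink factor of $\Theta(n)$ that no dyadic-class geometric series can absorb. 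Relatedly, your claim that the capacity constraint bounds the total \emph{length} of selected intervals covering a point confuses fractional mass with length: once you keep one interval per job with merely $x_{j,I_j}>0$, arbitrarily many selected intervals can cover the same point.

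The paper's proof is organized precisely around the ingredient you are missing. Constraint (\ref{lp1-non-preempt}) is used to show that no supported interval can cross two deadlines of a good independent set; supported intervals are then split at those deadlines (factor $2^{\alpha-1}$, Lemma~\ref{lem:deadline_zones}) and compressed into dyadic subzones each \emph{fully contained in the life interval} of its job (factor $2^{\alpha-1}$, Lemma~\ref{lemma:subzones}). That containment is what dissolves the packing difficulty: a job assigned to a subzone may be placed anywhere inside it, so the residual problem is a pure assignment problem. It is rounded not by independent per-job selection followed by packing, but by interpreting the transformed fractional solution as a fractional matching in a weighted bipartite graph of jobs versus subzone copies and invoking the existence of an integral matching of the same weight (Lemma~\ref{lemma:integer-matching}, in the style of Shmoys--Tardos); a final factor-$3$ shrink (Lemma~\ref{lemma:matching-to-schedule}) restores feasibility, for a total of $2 \cdot 2 \cdot 3 = 12$. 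The matching step is exactly what certifies the Hall-type feasibility condition that your packing step silently assumes and cannot establish.
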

As a corollary, we obtain the following theorem.

\begin{thm}\label{thm:existsingproc}
  There exists a polynomial-time algorithm which computes a $\singleprocgap$-approximation to the \textsc{Non preemptive minimum energy scheduling ($\alpha$)} problem on a single processor
\end{thm}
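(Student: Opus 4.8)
The plan is to derive Theorem~\ref{thm:existsingproc} as a direct consequence of the integrality-gap bound of Theorem~\ref{thm:intgap}. Since \textbf{LP1} is a relaxation of the single-processor problem, its optimum value $\mathrm{LP}$ satisfies $\mathrm{LP}\le \mathrm{OPT}$. I would rely on the integrality-gap proof of Theorem~\ref{thm:intgap} being constructive: it should take any feasible solution of \textbf{LP1} and produce, in polynomial time, a feasible non-preemptive schedule whose energy is at most $\lpgap$ times the fractional cost. Applying this rounding to an optimal LP solution then yields a schedule of cost at most $\lpgap\cdot\mathrm{LP}\le \lpgap\cdot\mathrm{OPT}$. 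The only discrepancy between this $\lpgap$ factor and the claimed ratio $\singleprocgap$ is an extra $(1+\varepsilon)^{\alpha-1}$, which I would charge entirely to the time-discretization inherent in solving the LP.

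Concretely, \textbf{LP1} is built on the compact relaxation of~\cite{BKLLS}, which is defined over a geometric time grid: grid points are placed at all release dates and deadlines together with points spaced by ratio $1+\varepsilon$ inside each life interval. This keeps the number of variables and constraints polynomial, so the LP can be solved in polynomial time. The price of this discretization is that any continuous schedule must be snapped to the grid, shrinking each execution interval by a factor of at most $1+\varepsilon$; by Proposition~\ref{lemma:half} this multiplies its energy by at most $(1+\varepsilon)^{\alpha-1}$, so the optimum of \textbf{LP1} is at most $(1+\varepsilon)^{\alpha-1}\,\mathrm{OPT}$.

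Combining the two steps, I would solve \textbf{LP1} in polynomial time, obtaining a fractional solution of cost at most $(1+\varepsilon)^{\alpha-1}\,\mathrm{OPT}$, and then apply the rounding underlying Theorem~\ref{thm:intgap} to it. The resulting non-preemptive schedule is feasible and has energy at most $\lpgap\,(1+\varepsilon)^{\alpha-1}\,\mathrm{OPT} = \singleprocgap\,\mathrm{OPT}$, which is exactly the claimed bound, and every step runs in polynomial time.

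The substantive content lies entirely in Theorem~\ref{thm:intgap}, so the main obstacle for this corollary is not a new argument but two bookkeeping points: verifying that the rounding behind the integrality-gap bound is genuinely algorithmic and polynomial rather than merely existential, and checking that the geometric grid interacts cleanly with the non-preemption constraint that distinguishes \textbf{LP1} from the compact relaxation of~\cite{BKLLS}, so that snapping to the grid preserves feasibility while losing only the $(1+\varepsilon)^{\alpha-1}$ factor quantified by Proposition~\ref{lemma:half}.
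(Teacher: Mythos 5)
Your proposal is correct and takes essentially the same route as the paper: solve \textbf{LP1}, apply the constructive rounding behind Theorem~\ref{thm:intgap} (Algorithm~\ref{alg:singleproc}) for the $\lpgap$ factor, and charge the remaining $(1+\varepsilon)^{\alpha-1}$ to time discretization via Lemma~\ref{lem:discretization} (note the paper's grid is Huang--Ott's equally-spaced landmarks, not the geometric grid of~\cite{BKLLS} you describe, though this changes nothing). One inaccuracy worth fixing: your opening claim $\mathrm{LP}\le\mathrm{OPT}$ is false, since \textbf{LP1} only admits landmark-aligned execution intervals and hence is not a relaxation of the continuous problem; your later bound $\mathrm{LP}\le(1+\varepsilon)^{\alpha-1}\,\mathrm{OPT}$, which is what your final computation actually uses, is the correct one, exactly as in the paper.
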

\begin{proof}
Since our proof of the integrality gap of \textbf{LP1} is algorithmic, it is straightforward to obtain the claimed ratio:

Given an instance of \textsc{Non preemptive minimum energy scheduling ($\alpha$)}, write the linear program \textbf{LP1} corresponding to this instance, solve it to obtain a fractional solution and then use Algorithm \ref{alg:singleproc} to obtain an integral solution of value at most $\lpgap$ times the value of the fractional solution and output it. Since the fractional solution obtained was optimal, by Lemma \ref{lem:discretization}, it has value at most $(1+\varepsilon)^{\alpha-1}$ times the energy consumed by the optimum schedule. Thus, the integral solution we output attains the claimed bound.
\qed \end{proof}

\subsection{Linear programming formulation}

To model the problem, we start from 0-1 variables $x_{I,j}$ indexed by a job $j$ and an execution interval $I$ which indicate whether $j$ is assigned to $I$. To bound the number of variables, we use the following result of Huang and Ott \cite{Huang_Ott} which allows us to restrict our attention to schedules where all execution intervals begin and end in some set  $T$ of time points such that $|T|$ is  polynomial in the input.

\begin{lem}[Discretization of time]\label{lem:discretization}\cite{Huang_Ott}
    Let $r_1,\ldots,r_{2n}$ be the release dates and deadlines of jobs.
    For each $1 \le i < 2n$, create $n^2 (1 + \frac{1}{\varepsilon}) -1$ equally-spaced ``landmarks'' in the interval $[r_i, r_{i+1}]$.
    Let $S$ be a solution of minimal cost such that for each job $j$ and each consecutive landmarks $t_i, t_{i+1}$, either
    job $j$ is executed during the whole interval $[t_i, t_{i+1}]$ or not at all.
    Then $E(S) \le (1 + \varepsilon)^{\alpha - 1} \text{OPT}$.
\end{lem}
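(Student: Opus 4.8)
The plan is to start from an optimal schedule and round its execution intervals onto the grid of landmarks, arguing that this can be done while shrinking each interval by a factor of at most $(1+\varepsilon)$; the bound on energy then follows immediately from Proposition~\ref{lemma:half}. Indeed, if a job with execution interval $I$ is reassigned to a grid-aligned interval $I'$ with $|I'|\ge |I|/(1+\varepsilon)$, its energy is multiplied by $(|I|/|I'|)^{\alpha -1}\le (1+\varepsilon)^{\alpha -1}$, and summing over all jobs gives $E(S)\le (1+\varepsilon)^{\alpha-1}\,\text{OPT}$. Write $\delta_i = (r_{i+1}-r_i)/\big(n^2(1+\tfrac{1}{\varepsilon})\big)$ for the spacing of the grid inside the $i$-th elementary segment $[r_i,r_{i+1}]$. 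Since all release dates and deadlines are themselves grid points, rounding an endpoint to a nearby grid point never pushes it outside the relevant life interval.

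First I would classify each execution interval as \emph{long} or \emph{short} according to whether its length is at least $2\delta_i(1+\varepsilon)/\varepsilon = 2(r_{i+1}-r_i)/n^2$ in the segment it occupies. For a long interval $[s_j,e_j]$ I round inward, taking the smallest grid point $\ge s_j$ and the largest grid point $\le e_j$. Inward rounding only shrinks intervals, so both containment in $[r_j,d_j]$ and disjointness on the single processor are preserved for free (the ordering of endpoints is monotone under rounding), and the length drops by at most $2\delta_i$, which by the very definition of the threshold is a factor of at most $(1+\varepsilon)$. Thus long jobs meet the required bound and, as a bonus, free up additional grid-aligned idle space.

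The delicate step, which I expect to be the main obstacle, is the short intervals: these are too short to round inward without collapsing, so instead I would relocate each short job into idle grid cells lying inside its life interval, giving it a grid-aligned slot of length at least its original length. Since lengthening an interval only \emph{decreases} energy (again Proposition~\ref{lemma:half}), such a relocation never increases cost. What remains to be proved is that all short jobs can be placed simultaneously, pairwise disjointly and disjointly from the rounded long jobs. This is precisely where the quadratic number $n^2(1+\tfrac{1}{\varepsilon})$ of landmarks is exploited: there are at most $n$ short jobs, the total idle length made available after inward-rounding the long jobs is at least the total length of the short intervals, and the grid is fine enough (far more cells per segment than jobs) that a counting/Hall-type argument yields an assignment of the short jobs to disjoint grid-aligned slots contained in their life intervals. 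Controlling the per-slot rounding loss against the number of available idle cells, and dealing with intervals that straddle a segment boundary, is the technical heart of the argument; once settled, combining the long-job and short-job bounds produces a fully grid-aligned feasible schedule $S$ with $E(S)\le (1+\varepsilon)^{\alpha-1}\,\text{OPT}$.
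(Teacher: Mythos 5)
First, a point of reference: the paper does not prove this lemma at all; it is imported verbatim from Huang and Ott \cite{Huang_Ott}, so your proposal can only be judged on its own merits rather than against an in-paper argument. Your treatment of the long intervals is fine: with $\delta_i=(r_{i+1}-r_i)/\bigl(n^2(1+\tfrac1\varepsilon)\bigr)$, an interval of length at least $2\delta_i(1+\varepsilon)/\varepsilon=2(r_{i+1}-r_i)/n^2$ loses at most $2\delta_i$ under inward rounding, hence at most a $(1+\varepsilon)$ factor of its length, and monotonicity of rounding preserves disjointness and containment in life intervals.

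The gap is exactly where you flag it, but it is not a mere technicality to be filled in: the counting invariant you propose is insufficient, and your plan as stated fails on realizable instances. You argue that after inward-rounding the long jobs, the total idle grid-aligned length is at least the total length of the short intervals, and that this together with ``far more cells than jobs'' gives a Hall-type assignment. However, a grid-aligned slot must consist of whole cells, so each short job can waste up to a full cell $\delta_i$, and your length bound does not pay for this waste. Concretely, take a single segment (all jobs released at $0$ with deadline $L$), and note that every optimal schedule here runs all jobs at the common speed $s=\sum_j w_j/L$ with no idle time, so the cost is independent of the job order; choose the optimal schedule in which jobs with work $w_j=s\cdot c_j\delta$ ($c_j$ a large integer) alternate with pairs of tiny jobs of work $s\delta/2$. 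Then the long jobs are already grid-aligned, inward rounding frees nothing, there is no idle time, and each tiny job needs a full idle cell while only half as many cells exist as are needed: Hall's condition fails, and no assignment of the kind you want exists, even though every cell of the segment lies in every job's life interval.

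The missing idea is that the $(1+\varepsilon)$ budget must be spent on the \emph{long} jobs as well, not only on their rounding error. Since a long job has $\ell_j\ge 2\delta_i(1+\varepsilon)/\varepsilon$, shrinking it to a grid-aligned interval whose length is the least multiple of $\delta_i$ that is at least $\ell_j/(1+\varepsilon)$ stays within the allowed energy factor (by Proposition~\ref{lemma:half}) and releases at least $\ell_j\varepsilon/(1+\varepsilon)-\delta_i\ge\delta_i$ per long job; in a segment with little idle time the long jobs carry almost all of the length $L_i$, so the released space is on the order of $L_i\varepsilon/(1+\varepsilon)=n^2\delta_i$, which dominates the at most $n\delta_i$ of quantization waste incurred by the short jobs. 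This is precisely where the quadratic number of landmarks is used. Two further points then still need an argument: a short job needs \emph{contiguous} idle cells, so fragmentation must be handled, and intervals straddling a segment boundary see two different grid spacings. Both can be resolved by the observation (which your Hall-type framing also needs) that any job whose execution intersects a segment without containing it has a life interval containing the \emph{entire} segment; hence each segment can be repacked from scratch, pinning straddling jobs to the segment ends and laying out all remaining rounded intervals contiguously. Without this extra shrinking-and-repacking step, your construction does not go through.
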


Thus, we consider the set  ${\cal I}$ of all the  intervals with both endpoints in a landmark to be   the set of the allowed execution intervals.
Since $j$ must be scheduled somewhere, $\sum_{I} x_{I,j} = 1$. Since at any time $t$, at most one job is being processed, 
$\sum_j \sum_{I\ni t} x_{I,j}\leq 1$. Our LP, which we now state, contains an additional constraint (\ref{lp1-non-preempt}) capturing non-premption:
if some job $j$ is scheduled  during some interval $I$ or a subinterval thereof, then no other job can be scheduled during an interval that contains $I$. This holds for non-preemptive schedules but not necessarily for preemptive schedules, and in that sense this new constraint ``captures" non-preemption. The constraint is necessary to bound the integrality gap: without it, there exist instances and  fractional solutions that have much lower value than their integer counterpart (Lemma~\ref{lem:gap_lp1}, proved in below using  the instance described Figure~\ref{fig:exmpl_LP}.)   
%
\begin{alignat}{2}
    \textbf{\text{LP1:}} \quad \quad \text{minimize }   &E(\vec{x}) = \sum_{j \in J} \sum_{I \in \mathcal{I}} x_{I,j} \left(\frac{w_j}{|I|}\right)^{\alpha}  |I| &\\
    \text{subject to } & \sum_{I \in \mathcal{I}} x_{I,j} \ge 1 & \forall j \in J \label{lp1-job-assigned}\\
                       & \sum_{j \in J} \sum_{\substack{I \in \mathcal{I}\\ t \in I}} x_{I,j} \le 1 & \forall \text{ landmark } t \label{lp1-no-overlap}\\
                       & \sum_{\substack{I' \in \mathcal{I}\\ I' \cap I \neq \emptyset}} x_{I',j}  + \sum_{\substack{I'' \in \mathcal{I}\\ I \subseteq I'', j' \in J}} x_{I'',j'} \le 1& \forall I \in \mathcal{I}, \forall j \in J \label{lp1-non-preempt}\\
                       & x_{I,j} \ge 0 & \forall j \in J, \forall I \in \mathcal{I}(j) \label{lp1-non-negativity}\\
                       & x_{I,j} = 0 & \forall I \notin \mathcal{I}(j)
\end{alignat}

\begin{lem}\label{lem:gap_lp1}
Without constraint~(\ref{lp1-non-preempt}), \textbf{LP1} has integrality gap at least $\Omega (n^{\alpha-1})$.
\end{lem}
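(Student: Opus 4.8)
The plan is to prove the lower bound by exhibiting the instance of Figure~\ref{fig:exmpl_LP} together with one explicit cheap fractional solution, and then showing that \emph{every} integral solution is expensive. The instance consists of a single \emph{wide} job $j_0$ with life interval $[0,1]$ and work $1$, together with $n$ \emph{blocker} jobs: blocker $i$ (for $1\le i\le n$) has life interval (``cell'') $C_i=[\tfrac{i-1}{n},\tfrac{i}{n}]$ and small work $w_b=1/n$, so that $C_1,\dots,C_n$ partition $[0,1]$. I would first note the routine discretization bookkeeping: by choosing the landmarks fine enough (perturbing release dates and deadlines by a negligible amount if necessary), the midpoint of $[0,1]$ and the midpoints of all cells are landmarks, so the intervals used below belong to $\mathcal I$ and the energies are affected by only a $(1+o(1))$ factor.

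For the integral lower bound, I would argue that any feasible non-preemptive schedule is forced to run $j_0$ very fast. Blocker $i$ must execute inside $C_i$ on a positive-length subinterval disjoint from the execution interval $[s_0,e_0]$ of $j_0$; hence $[s_0,e_0]$ cannot contain any whole cell $C_i$, for otherwise blocker $i$ would be forced to overlap $j_0$. But an interval of length $\ge 2/n$ contains two consecutive multiples of $1/n$ and therefore a whole cell, so $e_0-s_0<2/n$. Since the energy of a job of work $1$ run in an interval of length $\ell$ is $1/\ell^{\alpha-1}$, which is decreasing in $\ell$, the cost of $j_0$ alone is at least $(n/2)^{\alpha-1}$, and thus every integral solution costs $\Omega_\alpha(n^{\alpha-1})$.

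For the fractional upper bound, I would exhibit a feasible solution of \textbf{LP1} \emph{without} constraint~(\ref{lp1-non-preempt}) of cost $O_\alpha(1)$. The key observation is that, in the absence of~(\ref{lp1-non-preempt}), a job can be \emph{diluted} to pointwise density $\tfrac12$ at only a constant energy blow-up by splitting it across two half-length intervals: route $j_0$ on the two halves $[0,\tfrac12]$ and $[\tfrac12,1]$ with weight $\tfrac12$ each, and route blocker $i$ on the two halves of $C_i$ with weight $\tfrac12$ each. Each job then carries total weight $1$, satisfying~(\ref{lp1-job-assigned}); and at any time $t$ the density is $\tfrac12$ from $j_0$ plus $\tfrac12$ from the unique blocker whose cell contains $t$, for a total of $1$, satisfying~(\ref{lp1-no-overlap}). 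A direct computation gives $E(j_0)=2\cdot\tfrac12\,(1/\tfrac12)^{\alpha}\,\tfrac12=2^{\alpha-1}$ and $E(\text{blocker }i)=w_b^{\alpha}(2n)^{\alpha-1}$, so the blockers contribute $n\,w_b^{\alpha}(2n)^{\alpha-1}=2^{\alpha-1}$ for $w_b=1/n$. Hence this fractional solution has value at most $2^{\alpha}=O_\alpha(1)$.

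Combining the two bounds gives integrality gap at least $(n/2)^{\alpha-1}/2^{\alpha}=\Omega(n^{\alpha-1})$, as claimed. The conceptual heart — and the step I expect to require the most care — is this dilution argument: one must see that replacing a job by two half-length intervals halves its pointwise footprint while multiplying its energy by only $2^{\alpha-1}$, and that this lets $j_0$ coexist pointwise with all $n$ blockers even though a single contiguous (non-preempted) interval for $j_0$ cannot avoid them. This is precisely the configuration that~(\ref{lp1-non-preempt}) forbids: taking $I=C_i$ with $C_i\subseteq[0,\tfrac12]$ and $j$ equal to blocker $i$, the first sum of~(\ref{lp1-non-preempt}) equals $1$ (all of blocker $i$'s weight meets $C_i$) while the second picks up the weight $\tfrac12$ of $j_0$ on $[0,\tfrac12]\supseteq C_i$, giving left-hand side $\tfrac32>1$, so the constraint rules the solution out. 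The remaining loose ends are routine: the landmark/discretization accounting above, and checking that the blockers' small work does not affect the integral lower bound (it does not, since that bound uses only $j_0$).
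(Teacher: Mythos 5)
Your proof is correct and takes essentially the same approach as the paper: your instance is the paper's instance up to rescaling time and work (the paper uses $n$ unit-length small jobs of work $1$ plus a big job of work $n$ on $[0,n]$; you compress to $[0,1]$ with works $1/n$ and $1$), your fractional solution is the identical ``split each life interval into two halves with weight $\tfrac12$'' construction, and your integral lower bound is the same argument that the long job is forced into an interval too short to contain any blocker's life interval. The only difference is cosmetic bookkeeping (landmark placement and the endpoint-overlap convention for closed intervals), which the paper glosses over as well.
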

\begin{proof}
We construct an instance on which the integrality gap is at least $\Omega (n^{\alpha -1})$.
Let us define $n+1$ jobs. We create $n$ \emph{small} jobs and a \emph{big} job.
The $i^{\text{th}}$ small job has release date $i-1$ and deadline $i$ and processing requirement 1.
The big job has release date 0 and deadline $n$ and a processing requirement of $n$.
More details about this instance are given figure \ref{fig:exmpl_LP}.

An integral solution will have to process the big job between two consecutive small jobs and thus,
will have a cost of at least $(\frac{n+2}{2})^{\alpha} \cdot 2 = \Omega (n^{\alpha})$.

Now, consider the following assignment of the variables. For any job $j$, let $L_j^1$ and $L_j^2$ be respectively the
first and second halves of its life interval.
We set $x_{L_j^1,j} = x_{L_j^2,j} = 1/2$ and the other $x_{I,j}$ to 0. Hence constraint \ref{lp1-job-assigned} is satisfied.
Consider now a time $t$ which is not a release date or a deadline. At this time, we can process a small job and the big job.
For both of them, we are either on their first or second halves and so, the sum of the $x_{I,j}$ such that $I$ contains $t$ is
at most 1. It follows that constraint \ref{lp1-no-overlap} is also satisfied and that the assignment described is a solution for
the linear program without constraint \ref{lp1-non-preempt}.

By doing this, each job costs $2^{\alpha -1}$, hence the optimal fractional solution has value at most $n \cdot (2^{\alpha -1}) = \mathcal{O}(n)$.
Therefore, the integrality gap is at least $\Omega (n^{\alpha -1})$.
\qed\end{proof}

\begin{figure}
   \begin{center}
        \includegraphics[scale=0.9]{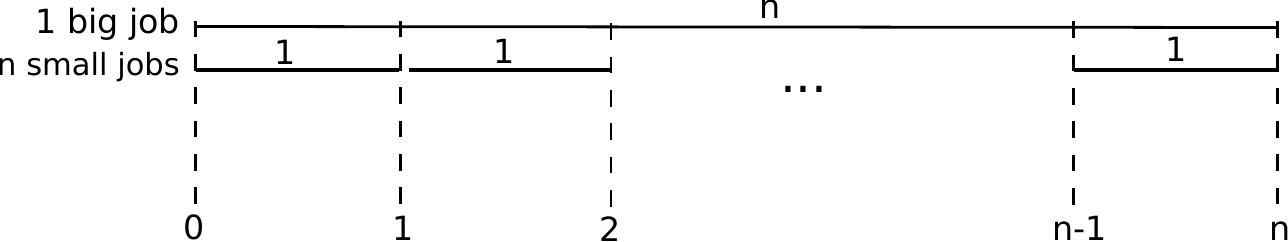}
   \end{center}
   \caption{An instance on which LP1 has an integrality gap of at least $\Omega (n^{\alpha -1})$.
   The instance contains a \emph{big} job and $n$ \emph{small} jobs. The figure shows the life intervals of the jobs
   and, above them, their work volume.}
   \label{fig:exmpl_LP}
\end{figure}

The remainder of this section is devoted to proving Theorem~\ref{thm:intgap}.

\subsection{Overview}

We show that any fractional solution can be transformed into an integral solution without increasing the value of the solution by too much, in three steps. 

We first divide the time into zones and transform the fractional solution so all (non-zero) fractional execution intervals are inside a zone. Then, each zone is divided into nested subzones and we further transform our fractional solution so that all fractional execution intervals are inside a subzone and the life interval of the corresponding job contains that subzone. Finally, we build a weighted bipartite graph from the transformed fractional solution whose edges represent the possible allocation of execution intervals to subzones. Similarly to~\cite{Shmoys_Tardos}, we find an integral (weighted) matching in this graph and translate this solution to an integral schedule.

We then show that the cost of the integral solution we built is at most $\lpgap$ times the cost of the original fractional solution.


\subsection{Building an integral solution from a fractional solution}

We now give the detailed description for our procedure to transform a fractional in three steps. The first step is derived from Antoniadis and Huang's algorithm \cite{Antoniadis_Huang}. 

\subsubsection{Splitting execution intervals on deadlines}

Our first transformation turns a fractional solution into a fractional solution where $x_{I,j}$ is 0 for any execution interval $I$ that any points in a set of deadlines we pick. The deadlines we pick are the deadlines of a good independent set.

\begin{lem}\label{lem:deadline_zones}
  Let \textbf{LP1} be the LP obtained from an instance of \textsc{Non preemptive minimum energy scheduling ($\alpha$)} and $\cal{J}$ be any good independent set for this instance.

  In polynomial time, we can transform any fractional solution $\vec{x}$ to \textbf{LP1} to a fractional solution $\vec{y}$ of value at most $2^{\alpha-1}E(\vec{x})$ where $y_{I,j}=0$ if $I$ crosses a deadlines of $\cal{J}$.
\end{lem}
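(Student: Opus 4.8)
The plan is to replay, for the LP, the interval-cutting argument behind Lemma~\ref{lemma:cut} (itself derived from Antoniadis and Huang~\cite{Antoniadis_Huang}). Order the jobs of $\mathcal{J}$ by their pairwise-disjoint life intervals, so that their deadlines are $d_1<d_2<\cdots<d_k$; these split the time line into \emph{zones}. The transformation is local and greedy: for every pair $(I,j)$ with $x_{I,j}>0$ such that $I$ crosses some $d_\ell$, I cut $I$ at that deadline and move all of its weight onto the longer of the two resulting pieces, while leaving untouched every interval that already lies inside a single zone. Setting $\vec{y}$ to be the result, I claim it is the desired solution.

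The one structural fact the whole argument rests on — and the step I expect to be the main obstacle — is that \emph{every interval $I$ in the support of $\vec{x}$ crosses at most one deadline of $\mathcal{J}$}. This is the analogue of the first paragraph of the proof of Lemma~\ref{lemma:cut}, but the justification used there (a job of $\mathcal{J}$ ``must be executed'' in the zone, so a second job cannot occupy the whole zone without violating non-overlap on the processor) is unavailable for a fractional solution. Instead the claim must be extracted purely from goodness of $\mathcal{J}$: were $I\subseteq L_j$ to cross two consecutive deadlines $d_\ell<d_{\ell+1}$, it would contain the entire zone $[d_\ell,d_{\ell+1}]$, hence the life interval of the element of $\mathcal{J}$ with deadline $d_{\ell+1}$ (which, by disjointness of the life intervals, lies inside that zone), and goodness is precisely the hypothesis designed to forbid such a configuration. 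Stating and applying this implication correctly is the delicate point, because the cut-and-keep step only yields the claimed factor when there are exactly two pieces: an interval crossing $k$ deadlines would be cut into $k+1$ pieces and, keeping the longest, would blow up by $(k+1)^{\alpha-1}$ rather than $2^{\alpha-1}$.

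Granting this structural fact, the cost bound is immediate. Writing the crossed interval as $I=[s,e]=[s,d_\ell]\cup[d_\ell,e]$ and letting $\hat{I}$ be the longer half, we have $|\hat{I}|\ge|I|/2$ and $\hat{I}$ meets no deadline. Moving the weight $x_{I,j}$ from $I$ to $\hat{I}$ multiplies the energy of that weight by $(|I|/|\hat{I}|)^{\alpha-1}\le 2^{\alpha-1}$ by Proposition~\ref{lemma:half}, and the intervals already contained in a zone are unchanged; summing over all $(I,j)$ gives $E(\vec{y})\le 2^{\alpha-1}E(\vec{x})$, and by construction $y_{I,j}=0$ whenever $I$ crosses a deadline of $\mathcal{J}$.

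Finally I would verify that $\vec{y}$ is feasible for \textbf{LP1}, which is clean because every support interval of $\vec{y}$ is a \emph{sub}-interval of a support interval of $\vec{x}$ and the per-job totals $\sum_{I}x_{I,j}$ are preserved, so~(\ref{lp1-job-assigned}) still holds. For the remaining constraints I use a simple monotonicity of shrinking: if $\hat{I}\subseteq I$, then any landmark of $\hat{I}$ is a landmark of $I$, any reference interval met by $\hat{I}$ is met by $I$, and any reference interval contained in $\hat{I}$ is contained in $I$. Consequently, replacing $I$ by $\hat{I}$ can only decrease the left-hand side of~(\ref{lp1-no-overlap}) and of both sums in~(\ref{lp1-non-preempt}), so these constraints are inherited from $\vec{x}$. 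Since each support interval is processed once and all cutting is driven by the fixed deadlines $d_1,\dots,d_k$, the procedure runs in polynomial time.
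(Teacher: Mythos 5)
Your shifting procedure, the $2^{\alpha-1}$ cost bound via Proposition~\ref{lemma:half}, and the observation that replacing a support interval by a sub-interval preserves constraints~(\ref{lp1-job-assigned})--(\ref{lp1-non-preempt}) are all exactly the paper's proof. The gap is at the step you yourself flagged as delicate: your justification of the structural fact that every support interval of $\vec{x}$ crosses at most one deadline of $\mathcal{J}$ is not valid. Goodness does not forbid the configuration you exhibit. Goodness says that no job's life interval lies \emph{strictly between} two consecutive deadlines of $\mathcal{J}$; but in your configuration the life interval sitting inside the zone $(d_\ell,d_{\ell+1}]$ is $L_{j'}$ for the member $j'$ of $\mathcal{J}$ whose deadline is $d_{\ell+1}$ --- and this is automatic from independence and touches $d_{\ell+1}$, so it never violates goodness --- while $L_j\supseteq I\supseteq[d_\ell,d_{\ell+1}]$ spans the whole zone, so $j$ does not violate goodness either. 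More fundamentally, goodness is a property of the life intervals alone, whereas the fact you need constrains the fractional variables: the arrangement of life intervals in your configuration is perfectly realizable, so no hypothesis on $\mathcal{J}$ can rule out $x_{I,j}>0$; only LP feasibility can.

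And it does, contrary to your claim that the argument of Lemma~\ref{lemma:cut} is ``unavailable for a fractional solution'': the non-preemption constraint~(\ref{lp1-non-preempt}) is precisely its fractional analogue, and this lemma is exactly where that constraint earns its keep. Since $x_{I'',j''}=0$ whenever $I''\not\subseteq L_{j''}$, constraint~(\ref{lp1-job-assigned}) forces $j'$ to carry total weight at least $1$ on sub-intervals of $L_{j'}\subseteq I$. Applying constraint~(\ref{lp1-non-preempt}) to the pair $(L_{j'},j')$ (a legal pair, since release dates and deadlines are landmarks) gives $\sum_{I'\cap L_{j'}\neq\emptyset}x_{I',j'}+\sum_{I''\supseteq L_{j'}}\sum_{j''}x_{I'',j''}\le 1$; the first sum is at least $1$ and the second contains $x_{I,j}$, so $x_{I,j}=0$. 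This is in essence the paper's own argument (the paper cites constraint~(\ref{lp1-no-overlap}) at this point, which alone is not enough for fractional solutions --- $j'$ may split its weight among disjoint sub-intervals of $L_{j'}$ --- but constraint~(\ref{lp1-non-preempt}) closes the case). That goodness cannot suffice is witnessed inside the paper itself: in the instance of Lemma~\ref{lem:gap_lp1} the small jobs form a good independent set, and the fractional solution exhibited there satisfies~(\ref{lp1-job-assigned}) and~(\ref{lp1-no-overlap}) yet puts weight $1/2$ on intervals crossing $\Theta(n)$ deadlines, on which your cut-and-keep step would lose $\Theta(n^{\alpha-1})$ rather than $2^{\alpha-1}$. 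Any correct proof of the structural fact must therefore invoke constraint~(\ref{lp1-non-preempt}); with that substitution, the rest of your argument goes through as written.
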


To prove this lemma, we will simply ``shift'' some of the values of $\vec{x}$.
\begin{defn}
  By \emph{shifting} $y_{I,j}$ to $y_{I',j}$ for a fractional solution $\vec{y}$ to \textbf{LP1}, we mean to increase $y_{I',j}$ by $y_{I,j}$ and decrease $y_{I,j}$ to 0.
\end{defn}

\begin{proof}
  First note that $x_{I,j}=0$ if $I$ crosses \emph{two} deadlines of $\cal{J}$. Indeed, otherwise, $I$ contains the execution interval of some job $j'$ of the independent set and $\vec{x}$ doesn't satisfy constraint (\ref{lp1-no-overlap}).

  We build $\vec{y}$ iteratively, starting with $\vec{y}=\vec{x}$.

  For each $x_{I,j}>0$ and $I=[s,e]$ crosses a deadline $d$ in $\cal{J}$ (so $s<d<e$), $I'$ the larger of the two intervals $[s,d]$, $[d,e]$ has size at least half the size of $I$. We shift $y_{I,j}$ to $y_{I',j}$. 

  This process shrink the size of each execution interval in the fractional solution by a factor of at most 2 and thus $E(\vec{y})$ is at most $2^{\alpha-1}E(\vec{x})$.

  Constraint (\ref{lp1-job-assigned}) is still satisfied as shifting $x_{I,j}$ from $y_{I,j}$ to $y_{I',j}$ preserves the sum of probabilities over $j$ (and does not affect the constraint for any other job $j' \ne j$).
  Constraint (\ref{lp1-no-overlap}) is still satisfied as $I'$ is contained in $I$ (so fewer (fractional) execution intervals cross each deadline in $\cal{J}$).
  Again, since $I'$ is contained in $I$, we see that constraint (\ref{lp1-non-preempt}) is satisfied since an inequality in constraint (\ref{lp1-non-preempt}) containing the term $x_{I',j}$ contains the term $x_{I,j}$.
\qed\end{proof}

\subsubsection{Further splits}

We now proceed to our second transformation and show that we can further split the execution intervals of a fractional solution. Now that all execution intervals
(in the support of $\vec{x}$) lie between two consecutive deadlines (which we now call a ``zone''), we can further partition each zone so the first half
is dedicated to jobs whose life interval ends in that zone and the second half is dedicated to the others (namely, jobs whose life interval starts
in that zone or jobs whose life interval contains the zone).

\begin{lem}\label{lemma:subzones}
  Let \textbf{LP1} be obtained from an instance of \textsc{Non preemptive minimum energy scheduling ($\alpha$)} and $\cal{J}$ be any good independent set for this instance.

  In polynomial time, we can transform any fractional solution $\vec{y}$ where $y_{I,j}=0$ if $I$ crosses a deadlines of $\cal{J}$ to \textbf{LP1} to a fractional solution $\vec{z}$ of value at most $2^{\alpha-1}$ times the value of $\vec{y}$  where $z_{I,j}>0$ implies
  \begin{enumerate}
  \item
    $I \subseteq [d_s, d_s + \frac{1}{2^{k}}(d_e-d_s) ] \subseteq L_j$ for some consecutive deadlines $d_s, d_e$  of $\cal{J}$ and $k \ge 1$, or
  \item
    $I \subseteq [d_e - \frac{1}{2^{k}}(d_e-d_s), d_e ] \subseteq L_j$ for some consecutive deadlines $d_s, d_e$  of $\cal{J}$ and $k \ge 1$.
  \end{enumerate}
\end{lem}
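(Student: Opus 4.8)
The plan is to treat each zone independently. After Lemma~\ref{lem:deadline_zones} every interval in the support of $\vec y$ lies inside a single zone $[d_s,d_e]$ (a maximal gap between two consecutive deadlines of $\mathcal J$), and distinct zones never interact, so it suffices to transform the mass inside one zone. Writing $D=d_e-d_s$, the first thing I would do is use the hypothesis that $\mathcal J$ is \emph{good} to split the jobs carrying mass here into two clean types. If $y_{I,j}>0$ for some $I\subseteq[d_s,d_e]$, then $L_j$ meets the zone; goodness forbids $L_j\subseteq(d_s,d_e)$, so $L_j$ contains $d_s$ or $d_e$. Hence either $d_j<d_e$, in which case goodness forces $r_j\le d_s$ and $L_j\cap[d_s,d_e]=[d_s,d_j]$ (I call $j$ \emph{left-type}), or $d_j\ge d_e$, in which case $L_j\cap[d_s,d_e]=[a_j,d_e]$ with $a_j=\max(r_j,d_s)$ (I call $j$ \emph{right-type}). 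This dichotomy is exactly what goodness provides.

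The transformation I propose is a pair of factor-$2$ compressions toward the two endpoints. For left-type jobs I apply the monotone bijection $\phi(t)=d_s+\frac12(t-d_s)$, which maps the zone onto its left half $[d_s,m]$ with $m=d_s+D/2$; for right-type jobs I apply $\psi(t)=d_e-\frac12(d_e-t)$, which maps the zone onto the right half $[m,d_e]$. I define $\vec z$ by shifting each $y_{I,j}$ onto $z_{\phi(I),j}$ (left-type) or $z_{\psi(I),j}$ (right-type). Every interval length is exactly halved, so by Proposition~\ref{lemma:half} each job's energy is multiplied by exactly $2^{\alpha-1}$ and $E(\vec z)=2^{\alpha-1}E(\vec y)$. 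For the subzone requirement, take a left-type job and set $\lambda_j=d_j-d_s<D$; pick $k_j\ge1$ maximal with $2^{-k_j}D\le\lambda_j$ (automatically $k_j\ge1$ since $\lambda_j<D$), so that $2^{-k_j}D>\lambda_j/2$. Since $I\subseteq[d_s,d_j]$ we then get $\phi(I)\subseteq[d_s,d_s+\lambda_j/2]\subseteq[d_s,d_s+2^{-k_j}D]\subseteq[d_s,d_j]\subseteq L_j$, which is precisely the first alternative of the lemma. The symmetric computation with $\psi$ and $\mu_j=d_e-a_j$ (capping $k_j$ at $1$, i.e.\ using the right half, when $\mu_j\ge D/2$) yields the second alternative for right-type jobs.

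Next I would check feasibility. Constraint~(\ref{lp1-job-assigned}) is immediate since shifting only redistributes the mass of a fixed job. For~(\ref{lp1-no-overlap}) and~(\ref{lp1-non-preempt}) I would exploit that $\phi$ and $\psi$ are monotone bijections, hence preserve, among intervals of a single type, both ``$I_1\cap I_2\ne\emptyset$'' and ``$I_1\subseteq I_2$''. Moreover the images of left-type intervals lie in $[d_s,m)$ (because $\phi(d_j)=d_s+\lambda_j/2<m$) while right-type images lie in $(m,d_e]$, so no interval of one type meets or contains an interval of the other. Consequently the occupancy at any landmark $u$ equals the original occupancy at $\phi^{-1}(u)$ or $\psi^{-1}(u)$, which is at most $1$, giving~(\ref{lp1-no-overlap}); and each instance of~(\ref{lp1-non-preempt}) for $\vec z$ collapses term by term onto the corresponding instance for $\vec y$, giving~(\ref{lp1-non-preempt}).

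I expect the verification of the non-preemption constraint~(\ref{lp1-non-preempt}) to be the delicate part: it relies crucially on $\phi,\psi$ preserving containment \emph{and} on the two halves not interacting, which in turn rests on the clean left/right classification — so the argument would break if goodness did not rule out jobs living strictly inside a zone. A secondary technical nuisance is discretization: $\phi(I)$ and $\psi(I)$ need not have landmark endpoints, so one must either refine the landmark set of Lemma~\ref{lem:discretization} to be closed under halving (at negligible cost) or snap the images to nearby landmarks, and ensure nothing lands exactly on the shared midpoint $m$, which the strict inequalities above already guarantee.
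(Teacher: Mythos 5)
Your proposal is correct and follows essentially the same route as the paper: it uses goodness to split each zone's jobs into those whose life interval ends in the zone and the rest, compresses each class by a factor of $2$ toward $d_s$ or $d_e$ respectively, invokes Proposition~\ref{lemma:half} for the $2^{\alpha-1}$ factor, and verifies the constraints via the disjointness of the two halves, exactly as the paper's shifting argument does. One small slip: your $k_j$ should be the \emph{minimal} (not maximal) $k\ge 1$ with $2^{-k}D\le\lambda_j$ (no maximal such $k$ exists), which is what your stated consequence $2^{-k_j}D>\lambda_j/2$ already presupposes; the paper instead chooses $k$ per execution interval according to where its right endpoint falls, an immaterial difference.
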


We let ${\cal Z}$ consists of all intervals of the form $[d_s, d_s + \frac{1}{2^{k}}(d_e-d_s) ]$ and $[d_e - \frac{1}{2^{k}}(d_e-d_s), d_e]$ 
for consecutive deadlines $d_s, d_e$ of $\cal{J}$.

Though they do not partition the timeline, we still refer to ${\cal Z}$ as \emph{subzones}. We now prove the following refinement of the above lemma where we make ${\cal Z}$ explicit.

\begin{proof}
  First set $\vec{z}$ to be $\vec{y}$.

  Let $[d_s,d_e]$ be a zone defined by two consecutive deadlines. No life interval $L_j$ is contained in $[d_s,d_e]$ (or $\cal{J}$ is not a good independent set as we could add $L_j$ and still obtain an independent set). Thus, we can partition jobs whose life interval intersects $[d_s,d_e]$ into jobs $E$ whose life interval starts (at or) before $d_s$ and jobs $S$ whose life interval end (at or) after $d_e$ (putting jobs that can go into both into either set).

  If $z_{I,j}>0$ and $I$ intersects $[d_s,d_e]$ then $I$ is a subset of $L_j$ which also intersects $[d_s,d_e]$ so $j$ is in $S$ or $E$.

  We now describe how to shift $z_{I,j}$ with $j$ in $E$ (the shift for jobs in $S$ is symmetric). We simulaneously shift all $z_{I,j}>0$ with $I=[s,e]$. Since $I$ does not cross $d_s$ or $d_e$, it ends in $[d_s,d_e]$ and $e$ is in $[d_s + \frac{1}{2^k}, d_s + \frac{1}{2^{k-1}}(d_e-d_s)]$ for some $k$. We shift $z_{I,j}$ to $z_{I',j}$ where $I=[s-\frac{1}{2}(s-d_s) , e-\frac{1}{2}(e-d_s)$.

  Shifts for $z_{I,j}$ with $j$ in $S$ are defined symmetrically (by reversing the timeline).

  Since each execution interval of $\vec{y}$ is shifted to an execution interval exactly half its original size, by Lemma \ref{lemma:half}, $E(\vec{z})$ is at most $2^{\alpha-1}E(\vec{y})$.
  Since we only shifted intervals, constraint (\ref{lp1-job-assigned}) remains satisfied. Constraint (\ref{lp1-no-overlap}) is satisfied as we simply compressed the entire region $[d_s, d_e]$ to $[d_s, d_e-\frac{1}{2}(d_e-d_s)]$ for jobs in $S$ and the entire region $[d_s, d_e]$ to $[d_s +\frac{1}{2}(d_e-d_s), d_e]$ for jobs in $E$. Finally, constraint (\ref{lp1-non-preempt}) is satisfied by $\vec{z}$ since again these constraint were satisfied by $\vec{y}$ and we only compressed some block of execution intervals into disjoint regions of the timeline.
\qed\end{proof}

\subsubsection{Building a weighted bipartite matching}

As a result of Lemma \ref{lemma:subzones}, for each $z_{I,j}>0$, $I$ is contained in some subzone $Z$ and furthermore, the life interval of $j$ contains $Z$ so we can freely shift $I$ to another interval (of the same length) inside $Z$.
Thus, we will only remember the length of the fractional execution intervals and the subzone $Z$ in which they belong. I.e., we think of $\vec{z}$ as a fractional assignment of lengths $\ell_i$ for each job to $Z$.

\begin{lem}\label{lemma:earliest_deadline_first}
  If for each subzone $Z$, the lengths $\ell(e)$ assigned to $Z$ and all subzones included in $Z$ is at most $|Z|$ then there is a feasible schedule where each job is given their assigned length in $Z$.
\end{lem}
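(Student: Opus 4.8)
The plan is to reduce the claim to the classical single-machine feasibility of \emph{earliest-deadline-first} (EDF) scheduling with a common release date, exploiting the nested (laminar) structure of the subzones in ${\cal Z}$. First I would record this structure. Within one zone $[d_s,d_e]$ determined by two consecutive deadlines of $\cal{J}$, the subzones split into two chains: the \emph{left-anchored} subzones $Z_k=[d_s,\, d_s+\frac{1}{2^k}(d_e-d_s)]$, which all share the endpoint $d_s$ and are nested as $k=1,2,\dots$, and the symmetric \emph{right-anchored} subzones sharing $d_e$. The largest left-anchored subzone is the left half of the zone and the largest right-anchored one is the right half, so the two chains lie in disjoint halves; and subzones of distinct zones are disjoint because the zones are. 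Hence ${\cal Z}$ is laminar, and the subzones strictly contained in $Z_k$ are exactly the smaller left-anchored subzones of the same chain (and symmetrically on the right).

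Next I would translate the hypothesis into an EDF feasibility condition chain by chain. Fix a zone and consider its left-anchored chain; take $d_s$ as a common release date and assign to each job placed in the subzone ending at $\tau=d_s+\frac{1}{2^k}(d_e-d_s)$ the deadline $\tau$. A job then lies in a subzone contained in $Z_k$ iff its deadline is at most the right endpoint of $Z_k$, and $|Z_k|$ equals that endpoint minus $d_s$. Thus the hypothesis ``the total assigned length in $Z$ and all subzones inside $Z$ is at most $|Z|$'' is precisely the statement that, for every deadline $\tau$, the total length of jobs with deadline at most $\tau$ is at most $\tau-d_s$. By the standard EDF guarantee, processing these jobs in nondecreasing order of deadline with no idle time from $d_s$ produces a schedule in which each job completes by its deadline; consequently the execution interval of a job assigned to $Z_k$ lies in $[d_s,\tau]=Z_k\subseteq L_j$ (the last inclusion by Lemma~\ref{lemma:subzones}), so it respects its life interval. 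The right-anchored chain is handled identically after reversing the timeline.

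Finally I would glue the per-chain schedules together. Since the two chains of a zone occupy its disjoint left and right halves, and distinct zones are disjoint, the resulting EDF schedules never overlap; together they yield a single schedule in which each job receives its assigned length inside its subzone and within its life interval. Note that applying the hypothesis to $Z=Z_j$ itself already forces $\ell(j)\le|Z_j|$, so each individual job fits. The main obstacle is the translation in the second paragraph: recognizing that the laminar ``subtree-load at most capacity'' hypothesis is exactly the EDF (equivalently, Hall-type flow) feasibility condition for common-release-date scheduling, and verifying carefully that the left and right chains of each zone, as well as distinct zones, are pairwise disjoint so that the local EDF schedules assemble into a globally feasible non-preemptive schedule.
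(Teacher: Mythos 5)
Your proof is correct and takes essentially the same approach as the paper: the paper's proof greedily packs jobs contiguously from the anchored endpoint of each nested chain of subzones (innermost subzones first, i.e., earliest deadline first, and symmetrically rightmost/latest for right-anchored subzones), which is exactly your EDF argument, with the paper's terse induction replaced by your explicit Hall-type feasibility condition. Your write-up is a more careful, fuller account of the same construction the paper sketches.
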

\begin{proof}
We greedily assign $j$ an interval of length $\ell(e)$ to the leftmost possible empty spot if $Z$ is of the form $[d_s, d_s + \frac{1}{2^{k}}(d_e-d_s) ]$. In fact, at each intermediate step, all jobs currently assigned to $Z$ take up a contiguous region starting from the left endpoint $d_s$ of $Z$. If $Z$ is of the form $[d_e - \frac{1}{2^{k}}(d_e-d_s), d_e ]$, we assign $j$ the rightmost possible interval of length $\ell(e)$.
We see by induction on the number of subzones contained in $Z$ that every job is processed in $Z$.
\end{proof}


We now desire an integral assignment of lengths to each $Z$ where the total of all lengths assigned to $Z$ does not exceed $|Z|$. Note that this constraint is satisfied by the fractional solution derived from $\vec{z}$ (as $\vec{z}$ satisfies constraint (\ref{lp1-no-overlap})).

To obtain such an integral assignment from our fractional assignment derived from $\vec{z}$, we build a weighted bipartite graph $G(\vec{z})$ where an assignments correspond to matchings and the weight of a matching correspond to the energy cost (of the matching interpreted as a schedule). We will then obtain an integral matching from the derived fractional matching (whose weight is exactly $E(\vec{z})$.

We now describe $G(\vec{z})$ with bipartition $(A,B)$ and weight $w(e)$ for each edge $e \in E(G)$. We also keep a \emph{length} $\ell(e)$ for each edges which will be used in the very last step of our proof (but in no way affects the weighted bipartite matching we look for).

\begin{itemize}
\item
  $A$ contains one vertex for each job. I.e., $A = \{a_j | j \in J\}$
\item
  $B$ consists of vertices for subzones. However, $B$ may contain more than one vertex for each subzone $Z$. In fact, it contains the ceiling of the sum of fractional value of all lengths assigned to $Z$. I.e.,
  \[
  B = \left\{b_{Z,i} | Z \in {\cal Z}, i \in 1,\ldots, \left\lceil \sum_{j \in J} \sum_{I \subseteq Z } z_{I,j} \right\rceil \right\}
  \]

\item
  The edges are constructed as followed. Start with all edges $a_jb_{Z,i}$ for all $i$ if $z_{I,j}>0$ for some $I \subseteq Z$. We now delete some edges to obtain the edges of $G(\vec{z})$ and assign weights and length of the remaining edges.

  Sort the lengths assigned to $Z$ by $\vec{z}$ in decreasing order of length. For each such length $\ell_k$ for job $j$ of fractional value $z_{I,j}$, set $w(a_jb_{Z,i})$ to $\frac{w_j^\alpha}{\ell_k^{\alpha-1}}$ where $i$ is the ceiling of the partial sum of all jobs previously considered for $Z$ (i.e., $i=\lceil \sum_{q=1}^{k-1} \ell_{q} \rceil$). Set $\ell(a_jb_{Z,i})$ to $\ell_k$. Also set $w(a_jb_{Z,i+1})$ to $\frac{w_j^\alpha}{\ell_k^{\alpha-1}}$ and $\ell(a_jb_{Z,i+1})$ to $\ell_k$ if adding $j$ to the ceiling of the partial sum increases it by 1. Delete all other edges of the form $a_jb_{Z,t}$.
\end{itemize}

$\vec{z}$ naturally gives the following fractional matching $M(\vec{z})$ of $G(\vec{z})$ with total weight $E(\vec{z})$: we pick each edge with weight exactly $z_k$ (or $z_k$ split into two as follows if adding $z_k$ increased the ceiling of the partial sum by 1. Whatever we need to add to the partial sum to make it an integer is the fraction we choose of the first edge, and the rest of $z_k$ for the second edge).

To complete the description of our final transformation, we apply the following two technical lemmas.

\begin{lem}\label{lemma:integer-matching}\cite{Lovasz_Plummer_matching,Shmoys_Tardos}
  In a weighted bipartite graph, there exists an (integral) matching of same weight as any fractional matching\footnote{in a fractional matching edges can be selected with a fractional value as long as the total value of edges incident to any vertex is at most 1}.
\end{lem}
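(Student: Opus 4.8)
The plan is to prove the stronger, standard fact that the polytope of fractional matchings of a bipartite graph is integral, so that the given fractional matching is a convex combination of integral matchings that share its weight on average; for our minimization we then keep the cheapest of these. Concretely, the fractional matchings are exactly the points of $\{x \ge 0 : \sum_{e \ni v} x_e \le 1\ \forall v\}$, whose constraint matrix is the vertex--edge incidence matrix of a bipartite graph and is therefore totally unimodular. Hence every vertex of this polytope is integral, i.e.\ is the indicator of an integral matching, and writing the given fractional matching $x$ as a convex combination $x=\sum_k \lambda_k \mathbf{1}_{M_k}$ of such vertices gives $w(x)=\sum_k \lambda_k\, w(M_k)$; some $M_k$ then satisfies $w(M_k)\le w(x)$ (and symmetrically some other satisfies $w(M_k)\ge w(x)$).

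Because the paper's proof is meant to be algorithmic, I would instead carry this out by an explicit cycle/path-cancelling rounding, which is self-contained and constructive. Start from $x$ and let $F=\{e: 0<x_e<1\}$ be its fractional support. If $F$ contains a cycle, then since $G$ is bipartite the cycle has even length and its edges alternate between the two sides; pushing $+\delta$ and $-\delta$ alternately around it leaves every degree sum $\sum_{e\ni v}x_e$ unchanged, so feasibility is preserved for all small $\delta$. We then choose the sign of $\delta$ so that the linear total weight does not increase and grow $|\delta|$ until some edge reaches $0$ or $1$ and leaves $F$. If $F$ is acyclic it is a forest; take a maximal path in $F$, whose two endpoints are leaves of $F$. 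Such a leaf is incident to exactly one positive edge, of value in $(0,1)$, and to no saturated edge (an incident value-$1$ edge together with a positive fractional one would violate $\sum_{e\ni v}x_e\le 1$), so its degree constraint $\sum_{e\ni v}x_e<1$ is slack. Alternately adding $\pm\delta$ along the path keeps every internal degree fixed and stays feasible at the two slack endpoints for small $\delta$; we again pick the weight-non-increasing sign and push until an edge leaves $F$. Each step strictly decreases $|F|$ without increasing $w(x)$ and while preserving all constraints, so after at most $|E(G)|$ steps $F=\emptyset$, every $x_e\in\{0,1\}$, and the value-$1$ edges form an integral matching $M$ with $w(M)\le w(x)$.

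The only point needing care --- and the source of the ``same weight'' phrasing --- is the direction of each push: when the two sign choices change the weight oppositely we take the cheaper one, so the process can only decrease the weight (symmetrically one could force it to only increase). In the minimization setting in which the lemma is applied, namely rounding $M(\vec z)$ of weight $E(\vec z)$ to an integral matching, the inequality $w(M)\le w(x)$ is precisely what is needed, so the statement should be read as producing an integral matching of weight no larger than that of the given fractional matching. I expect no genuine obstacle beyond verifying the two feasibility claims: that cancelling along an even cycle preserves every degree sum exactly, and that a path endpoint, being a leaf incident to a single edge of value strictly less than $1$, has the slack to absorb the perturbation. Both are immediate from the bipartite (even-cycle) structure together with $0<x_e<1$ on the leaf edges.
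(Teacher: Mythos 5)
The paper itself offers no proof of this lemma: it is imported verbatim from the literature \cite{Lovasz_Plummer_matching,Shmoys_Tardos}, where it is the classical integrality of the bipartite fractional matching polytope. Your proposal therefore necessarily differs in route simply by supplying an actual argument, and the argument you give is correct. The total-unimodularity half is the standard proof of the cited fact, and your cycle/path-cancelling rounding is a sound constructive version of it: pushing $\pm\delta$ around an even cycle preserves every degree sum exactly, and at a leaf endpoint of a maximal path of the fractional support the only positive incident edge is the path edge itself (a value-$1$ edge there would violate $\sum_{e\ni v}x_e\le 1$), so the endpoint's degree sum equals that single edge's value and the push really can be driven until some edge hits $0$ or $1$ --- the endpoint constraint never becomes binding before an edge leaves the support. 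You are also right to flag that the lemma's literal phrasing ``same weight'' is too strong (a single edge with $x_e=\tfrac12$ already has no integral matching of equal weight); what is true, and what the paper actually needs in Lemma~\ref{lemma:matching-to-schedule}, is the ``weight at most'' direction, which both your convex-combination argument and your rounding deliver. One point you leave implicit but which the application requires: the integral matching must saturate $A$. Your procedure preserves this automatically --- a vertex with fractional degree sum $1$ either carries a value-$1$ edge or has at least two fractional incident edges, so it is never a leaf of the support, hence never a path endpoint, and cycle pushes and internal path vertices leave degree sums unchanged, so saturation survives to the integral matching. As for what each approach buys: the paper's citation is brief and defers correctness to the references, while your proof is self-contained and makes the rounding explicitly polynomial-time, which is in the spirit of the paper's claim in Theorem~\ref{thm:existsingproc} that the whole integrality-gap argument is algorithmic.
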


\begin{lem}\label{lemma:matching-to-schedule}
  Let $G(\vec{z})$ be the bipartite graph built from a transformed fractional solution $\vec{z}$.
  For any matching $M$ saturating $A$ of $G(\vec{z})$, we can obtain a schedule whose energy consumption is at most $3^{\alpha-1}$ times the weight of $M$. 
\end{lem}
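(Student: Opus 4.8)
The plan is to read the saturating matching $M$ as a rounded schedule and then pay a bounded factor to make it feasible. Each job vertex $a_j$ is matched to some $b_{Z,i}$, which assigns $j$ to the subzone $Z=Z(j)$ and to the length $\ell_j=\ell(a_jb_{Z,i})$ recorded on that edge; by construction the edge exists only because $j$ has an interval in $Z(j)$, and by Lemma~\ref{lemma:subzones} such a subzone satisfies $Z(j)\subseteq L_j$. The weight of $M$ is exactly $\sum_j w_j^\alpha/\ell_j^{\alpha-1}$, i.e. the energy of running each $j$ at constant speed for a duration $\ell_j$ inside $Z(j)$. So it suffices to realize these durations as pairwise disjoint execution intervals inside the subzones. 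I would do this by shrinking every assigned length by a factor of $3$ and invoking Lemma~\ref{lemma:earliest_deadline_first}: by Proposition~\ref{lemma:half} shrinking job $j$'s interval by a factor $3$ multiplies its energy by $3^{\alpha-1}$, so the resulting schedule has energy at most $3^{\alpha-1}$ times the weight of $M$.

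The heart of the proof is checking the hypothesis of Lemma~\ref{lemma:earliest_deadline_first} after shrinking: for every subzone $Z$, the total shrunk length assigned to $Z$ and to all subzones contained in $Z$ is at most $|Z|$; equivalently, the total \emph{unshrunk} integral length over the nested chain below $Z$ is at most $3|Z|$. Fix a zone $[d_s,d_e]$ of width $\Delta$ and consider its left chain of subzones $Z_k=[d_s,d_s+\Delta/2^k]$ (the right chain is symmetric, and subzones of distinct zones are disjoint). For each $k'$ let $T_{k'}$ be the total integral length that $M$ assigns to $Z_{k'}$; I must show $\sum_{k'\ge k}T_{k'}\le 3|Z_k|$.

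To bound a single $T_{k'}$ I use the Shmoys--Tardos bucketing underlying $G(\vec z)$. Because the lengths assigned to $Z_{k'}$ were placed in decreasing order into unit-mass buckets $b_{Z_{k'},i}$ and the integral matching picks at most one job per bucket, the length landing in bucket $i\ge2$ is at most the total fractional length of bucket $i-1$ (all lengths in bucket $i$ are below those of bucket $i-1$, hence below its average, and bucket $i-1$ carries fractional mass one). Summing telescopes, giving $T_{k'}\le L_{\max}(Z_{k'})+\mathrm{frac}(Z_{k'})$, where $L_{\max}(Z_{k'})$ is the largest length assigned to $Z_{k'}$ and $\mathrm{frac}(Z_{k'})$ is the fractional length assigned to it. Two facts about $\vec z$ then finish the estimate: first, every interval assigned to $Z_{k'}$ lies inside $Z_{k'}$, so $L_{\max}(Z_{k'})\le|Z_{k'}|=\Delta/2^{k'}$; second, counting each fractional interval only in its own (smallest) subzone and integrating the overlap constraint (\ref{lp1-no-overlap}) once over $Z_k$ shows that the fractional lengths of all intervals lying in $Z_k$ — hence $\sum_{k'\ge k}\mathrm{frac}(Z_{k'})$ — sum to at most $|Z_k|$.

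Combining, the maximum terms form a geometric series $\sum_{k'\ge k}|Z_{k'}|=\sum_{k'\ge k}\Delta/2^{k'}=2|Z_k|$, while the fractional terms contribute at most $|Z_k|$ in total, so $\sum_{k'\ge k}T_{k'}\le 2|Z_k|+|Z_k|=3|Z_k|$, which is exactly the slack absorbed by shrinking by $3$. Since each $Z(j)\subseteq L_j$ and the greedy placement of Lemma~\ref{lemma:earliest_deadline_first} fills each left subzone from $d_s$ (and each right subzone from $d_e$), the execution intervals respect all release dates and deadlines, so the schedule is feasible. I expect the main obstacle to be precisely the bookkeeping in the last step: the nested subzones play a double role, and one must charge each fractional interval to a single native subzone so that the fractional contributions collapse to one coverage integral over $Z_k$ (yielding the tight $|Z_k|$ rather than a lossy $2|Z_k|$), whereas the unavoidable ``$+\,$max'' overhead of the bucketing is what accounts for the remaining factor $2$; getting the constant to be exactly $3$ rather than $4$ hinges on keeping these two effects separate.
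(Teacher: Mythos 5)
Your proposal is correct and takes essentially the same route as the paper's proof: the same shrink-by-$3$ construction feeding Lemma~\ref{lemma:earliest_deadline_first}, the same sorted-bucket (Shmoys--Tardos) bound per subzone --- your $T_{k'}\le L_{\max}(Z_{k'})+\mathrm{frac}(Z_{k'})$ is exactly the paper's $\ell(Z)\le ma(Z,1)+v(Z)\le |Z|+v(Z)$ --- and the same final split into a geometric series contributing $2|Z_k|$ plus a fractional contribution of at most $|Z_k|$ obtained from constraint~(\ref{lp1-no-overlap}). Your explicit remark that each fractional interval must be charged only to its native (smallest) subzone is a useful clarification of a point the paper leaves implicit in its definition of $v(Z)$, but it does not change the argument.
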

\begin{proof}

\textbf{Schedule construction}.
Since $M(\vec{z})$ has weight $E(\vec{z})$ and saturates all of $A$, by Lemma \ref{lemma:integer-matching} there exists an integer matching $M'$ of the same weight that saturates $A$.

We build a schedule $S$ from $M'$ as follows. Consider subzones in order of containment starting with subzones containing no other subzones. For each edge $e=(a_j, b_{Z,i}) \in M'$ give $j$ an execution interval of length $\ell(e)/3$ in $Z$. By Lemma \ref{lemma:matching-to-schedule}, we only need to verify that for all subzones $Z$, the sum of lengths for all subzones contained in $Z$ does not exceed $|Z|$.

\textbf{Feasibility}.
We now check that for each subzone $Z$, the sum of lengths for all subzones contained in $Z$ does not exceed $|Z|$.

\begin{figure}[ht!]
   \begin{center}
        \includegraphics{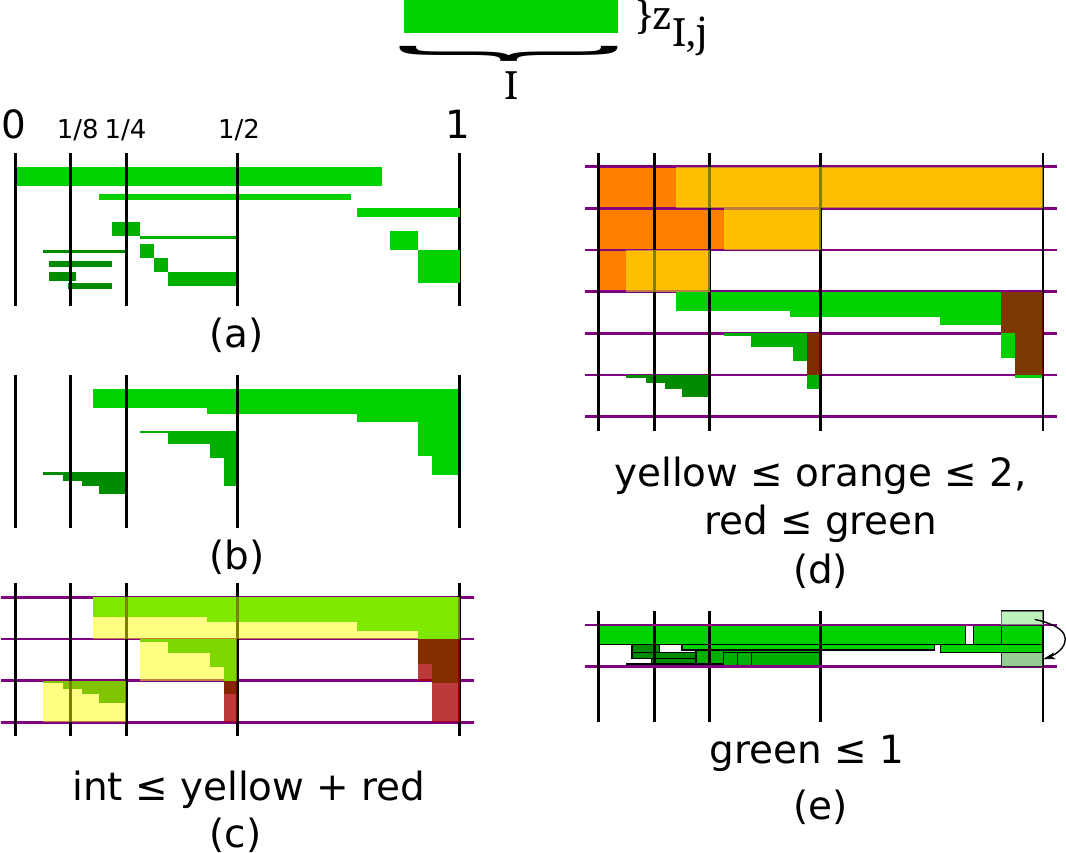}
   \end{center}
   \caption{An illustration of the feasibility proof. The fractional solution is represented in green with $z_{I,j}$ corresponding to the height of a rectangle. (b) Recall that $G(\vec{z})$ is built by first ordering the execution intervals by length. (c) In the worst case, the matching picked the longest edge incident to each vertex.}
   \label{fig:LP_matching}
\end{figure}

Let $v(Z)$ be the sum of all lengths assigned to $Z$ by the fractional solution, weighted by $z_{I,j}$ (i.e., $v(Z) = \sum_{j, I, I \subseteq Z \subseteq L_j} z_{I,j} |I|$). Since $\vec{z}$ satisfies constraint (\ref{lp1-no-overlap}), we have
\[
\sum_{Z' \subseteq Z} v(Z') \le |Z|
\]
for all $Z$.


Let $mi(Z,i)$ be the minimum length of all edges incident to $b_{Z,i}$ and $ma(Z,i)$ be the maximum length of all edges incident to $b_{Z,i}$. Since we considered lengths in non-increasing order of their $z$ values, $mi(Z,i) \ge ma(Z,i+1)$ for all $i$ and $Z$.

Let $n(Z)$ is the number of copies of vertex $b_{Z,i}$ and $\ell(Z)$ be the sum of lengths of all edges of $M$ incident to $\{b_{Z,i}\}_{i=1}^{n(Z)}$.
For all $Z$,
\[
\ell(Z) \le \sum_{i=1}^{n(Z)} ma(Z,i) \le ma(Z,1) + \sum_{i}^{n(Z)-1} mi(Z,i) \le |Z| + v(Z).
\]
and therefore the sum of lengths assigned to $Z$ in our schedule is at most $\frac{1}{3}(|Z| + v(Z))$.

Now for each $Z$, the sum of all lengths assigned to $Z$ and all subzones included in $Z$ in our schedule is at most
\[
\frac{1}{3}\left(\sum_{Z' \subseteq Z} |Z'|+ \sum_{Z' \subseteq Z} v(Z')\right) \le \frac{1}{3}\left(\left(\sum_{q=0}^\infty \frac{|Z|}{2^i}\right) + |Z|\right) \le \frac{1}{3}(3|Z|)
\]
and so our assigned lengths are feasible.
\qed\end{proof}

\begin{proof}(of Theorem \ref{thm:intgap})
  To prove the integrality gap of \textbf{LP1}, we simply need to apply each lemma in this section in turn.

  Given an optimal fractional solution $\vec{x}$ to \textbf{LP1}, find a good independant set $\cal{J}$ (to the instance which generate the LP) and apply Lemma \ref{lem:deadline_zones} to $\vec{x}$ and $\cal{J}$ to obtain a fractional solution $\vec{y}$ of value at most $2^{\alpha-1}E(\vec{x})$ where no execution interval crosses a deadline in $\cal{J}$.

  Then apply Lemma \ref{lemma:subzones} to $\vec{y}$ to obtain $\vec{z}$ of value at most $2^{\alpha-1}E(\vec{y}) \le 4^{\alpha-1}E(\vec{x})$ where for any non-zero $z_{I,j}$, $I$ is contained in a ``subzone'' of the form $[d_s, d_s + \frac{1}{2^{k}}(d_e-d_s)]$ or $[d_e - \frac{1}{2^{k}}(d_e-d_s), d_e ]$ for some consecutive deadlines $d_s, d_e$ and $k \ge 1$ and furthermore $L_j$ contains this subzone.

  Now build $G(\vec{z})$ and interpret $\vec{z}$ as a fractional matching in $G(\vec{z})$. By Lemma \ref{lemma:integer-matching}, $G(\vec{z})$ has a matching $M$ of same weight as this fractional matching and by Lemma \ref{lemma:matching-to-schedule}, we can build a schedule from $M$ whose energy consumption is at most $3^{\alpha-1}E(\vec{z}) \le 12^{\alpha-1}E(\vec{x})$.

  Such a schedule is of course a solution to \textbf{LP1} of same value, thus completing the proof.
\qed\end{proof}




\subsubsection{Algorithm summary}

We can summarize our algorithm from transforming any fractional solution $\vec{x}$ to an integral solution.

\begin{alg}\label{alg:singleproc}
~
\begin{enumerate}
\item Apply the transformation of Lemma \ref{lem:deadline_zones} and then Lemma \ref{lemma:subzones} to the fractional solution to obtain a new fractional solution $\vec{z}$.
\item Construct the weighted bipartite graph $G(\vec{z}) = (A,B)$.
\item Find a minimum weight matching $M$ that matches every node in $A$.
\item For each edge $e = (a_j, b_{Z, i}) \in M$, schedule job $j$ in the subzone $Z$ with an interval of length $\frac{\ell(e)}{3}$. Use an earliest deadline first schedule for all jobs in $Z$ if $Z$ is in the first half of a zone and an latest release date first schedule if $Z$ is in the second half of a zone.
\end{enumerate}

\end{alg}

\section{Open problems}

The APX-Hardness result and the algorithm for the heterogeneous multiprocessor setting togethers with
the linear program for the non-preemptive single processor case open several natural research directions. We summarize a few of them below.

\begin{itemize}
\item Is our analysis of the integrality gap of the linear program for $1|r_j,d_j|E$ tight? Is there an instance for which there is a gap, any gap at all 
  (beyond the gap created from discretization of time)?
\item Since we gave both an APX-Hardness proof and a constant factor approximation algorithm for the non-preemptive speed scaling problem with heterogeneous 
  constant work volumes, what is the best possible constant we can hope for?
\item The linear program we gave for $1|r_j,d_j|E$ has a natural extension for the multiprocessor setting (create a variable $x_{I,j,p}$ for any interval $I$,
  job $j$ and processor $p$). This extension has an arbitrarly large integrality gap (a refinement of the example of figure \ref{fig:exmpl_LP} leads to a gap of
  at least $\Omega (\sqrt(n)^{\alpha -1})$
  for $2|r_j,d_j|E$). Are there stronger linear constraints that we can add in order to obtain a constant integrality gap on two processors? One can show that we can force
  an independent set to be executed on one of the two processors, does this leads to a constant integrality gap? 
\item We showed that $R|r_j,d_j, w_{i,j},\text{pmtn, no-mig}|E$ is APX-Hard, is the non-heterogeneous non-preemptive version of the problem, namely $P|r_j,d_j,w_j|E$, also APX-Hard 
  (are work volume heterogeneity and non-preemption \emph{equally} hard?)? 
\end{itemize}

\bibliographystyle{plain}
\bibliography{non-preemptive_results}

\end{document}